\pdfoutput=1

\documentclass[
  a4paper,
  USenglish,
  thm-restate,
  pdfa,
]{lipics-v2021}
\nolinenumbers

\usepackage{csquotes}
\usepackage{mathtools}
\usepackage{xparse}
\usepackage{xspace}

\DeclareFontShape{T1}{lmr}{m}{scit}{<-> sub * lmr/m/scsl}{}
\sbox0{\sffamily X}
\DeclareFontShape{T1}{lmss}{bx}{sc}{<-> sub * cmr/bx/sc}{}
\DeclareFontShape{T1}{lmss}{bx}{scit}{<-> sub * cmr/bx/scsl}{}

\bibliographystyle{plainurl}

\usepackage[ruled]{algorithm}
\usepackage[noEnd]{algpseudocodex}
\RenewDocumentCommand\Call{s m o m}{%
  \textproc{#2}%
  \IfValueT{#3}{\ifmmode_{#3}\else\textsubscript{#3}\fi}%
  \ifblank{#4}{\IfBooleanT{#1}{()}}{({#4})}}

\usepackage{booktabs}

\NewDocumentCommand\thead{m}{{\textit{#1}}}
\RenewDocumentCommand\tnote{s m}{{\IfBooleanT{#1}{\rlap}{\textsuperscript{#2}}}}

\DeclarePairedDelimiter\set{\lbrace}{\rbrace}
\DeclarePairedDelimiterX\setCond[2]{\lbrace}{\rbrace}{%
  {#1}\,\delimsize\vert\,\mathopen{}{#2}}
\DeclarePairedDelimiter\tuple{\langle}{\rangle}

\DeclarePairedDelimiter\paren{\lparen}{\rparen}
\DeclarePairedDelimiter\abs{\lvert}{\rvert}
\DeclarePairedDelimiter\floor{\lfloor}{\rfloor}
\DeclarePairedDelimiter\ceil{\lceil}{\rceil}

\NewDocumentCommand\Oh{}{\mathcal{O}}
\NewDocumentCommand\Om{}{\Omega}
\NewDocumentCommand\om{}{\omega}
\NewDocumentCommand\Th{}{\Theta}

\NewDocumentCommand\mult{}{\mathbin{\cdot}}
\NewDocumentCommand\concat{}{\mathbin{\|}}

\NewDocumentCommand\Tau{}{\mathcal{T}}

\usepackage{tikz}
\usetikzlibrary{
  arrows.meta,
  matrix,
  patterns,
  positioning,
}
\usepackage{pgfplots}
\pgfplotsset{compat=1.18}

\usepackage[capitalise,noabbrev]{cleveref}
\crefname{algocf}{Algorithm}{Algorithms}
\Crefname{algocf}{Algorithm}{Algorithms}
\crefname{inv}{invariant}{invariants}
\Crefname{inv}{Invariant}{Invariants}
\creflabelformat{inv}{#2(#1)#3}

\NewDocumentCommand\arb{o}{\,\cdot\,}
\NewDocumentCommand\addr{m}{{\operatorname{Addr}_{#1}}}
\NewDocumentCommand\rank{}{\operatorname{rank}}
\NewDocumentCommand\ea{}{~et~al.\xspace}
\NewDocumentCommand\eg{}{e.\,g.,\xspace}
\NewDocumentCommand\ie{}{i.\,e.,\xspace}

\hideLIPIcs

\title{Optimal Offline ORAM with Perfect Security via Simple Oblivious Priority Queues}
\author{Thore Thie{\ss}en}{University of Münster, Germany}{t.thiessen@uni-muenster.de}{0009-0002-9902-3853}{}
\author{Jan Vahrenhold}{University of Münster, Germany}{jan.vahrenhold@uni-muenster.de}{0000-0001-8708-4814}{}
\authorrunning{T.\ Thie{\ss}en and J.\ Vahrenhold}
\Copyright{Thore Thie{\ss}en and Jan Vahrenhold}
\ccsdesc{Theory of computation~Data structures design and analysis}
\ccsdesc{Theory of computation~Cryptographic protocols}
\keywords{offline ORAM, oblivious priority queue, perfect security, external memory algorithm, cache-oblivious algorithm}

\begin{document}
  \maketitle
  \relatedversiondetails[cite={conferenceVersion}]{Conference Version}{https://doi.org/10.4230/LIPIcs.ISAAC.2024.55}

  \begin{abstract}
    \emph{Oblivious RAM} (ORAM) is a well-researched primitive to hide the memory access pattern of a RAM computation;
    it has a variety of applications in trusted computing, outsourced storage, and multiparty computation.
    In this paper, we study the so-called \emph{offline} ORAM in which the sequence of memory access locations to be hidden is known in advance.
    Apart from their theoretical significance, offline ORAMs can be used to construct efficient oblivious algorithms.

    We obtain the first optimal offline ORAM with perfect security from oblivious priority queues via time-forward processing.
    For this, we present a simple construction of an oblivious priority queue with perfect security.
    Our construction achieves an asymptotically optimal (amortized) runtime of $\Th(\log N)$ per operation for a capacity of $N$ elements and is of independent interest.

    Building on our construction, we additionally present efficient external-memory instantiations of our oblivious, perfectly-secure construction:
    For the cache-aware setting, we match the optimal I/O complexity of $\Th(\frac{1}{B} \log \frac{N}{M})$ per operation (amortized), and for the cache-oblivious setting we achieve a near-optimal I/O complexity of $\Oh(\frac{1}{B} \log \frac{N}{M} \log\log_M N)$ per operation (amortized).
  \end{abstract}


\section{Introduction}%
\label{sec:Introduction}

Introduced by Goldreich and Ostrovsky~\cite{Goldreich.Ostrovsky96}, \emph{oblivious RAM} (\emph{ORAM}) conceals the memory access pattern of any RAM computation.
This prevents the leakage of confidential information when some adversary can observe the pattern of memory accesses.
We consider oblivious RAM in the offline setting:
This allows an additional pre-processing step on the access pattern while still requiring that the access pattern is hidden from the adversary.

Offline ORAMs can be used to construct efficient oblivious algorithms in situations where at least part of the memory access sequence is either known or can be inferred in advance.
As a motivating example, consider the classical Gale--Shapley algorithm for the stable matching problem~\cite{Gale.Shapley62,Mondal.Panda.ea23}:
In each round of the algorithm, up to $n$ parties make a proposal according to their individual preferences.
The preferences must be hidden to maintain obliviousness, and thus the memory access pattern may not depend on them.
While it seems that the standard algorithm makes online choices, in fact the preferences and the current matching are known before each round, so the proposals can be determined in advance and an offline ORAM can be used to hide the access pattern in each round.

Many of the previous works on offline (and online) ORAMs focus on statistical and computational security:
While optimal offline ORAMs are known for computational and statistical security~\cite{Asharov.Komargodski.ea22,Shi20}, the same is not true for perfect security.
We close this gap and obtain the first (asymptotically) optimal offline ORAM with perfect security.
We derive our construction from an oblivious priority queue.

For this, we discuss and analyze a construction of an oblivious priority queue simple enough to be considered part of folklore.
In fact, both the construction and its analysis can be used in an undergraduate data structures course as an example of how to construct an efficient oblivious data structure from simple building blocks.
Our construction reduces the problem to oblivious partitioning where an optimal oblivious algorithm~\cite{Asharov.Komargodski.ea22} is known.

\subsection{Oblivious Data Structures}%
\label{sec:ObliviousDataStructures}

Conceptually, (offline) ORAM and oblivious priority queues are \emph{oblivious data structures}.
Oblivious data structures provide efficient means to query and modify data while not leaking information, \eg distribution of the data or the operations performed, via the memory access pattern.
There are three main applications:
\begin{description}
  \item[Outsourced Storage] When storing data externally, oblivious data structures can be used in conjunction with encryption.
  Encryption alone protects the confidentiality of the data at rest, but performing operations may still leak information about queries or the data itself via the access pattern~\cite{Islam.Kuzu.ea12}.
  \item[Trusted Computing] When computing in trusted execution environments, oblivious data structures safeguard against many memory-related side channel attacks~\cite{Osvik.Shamir.ea06}.
  \item[{(Secure) Multiparty Computation}] In this setting, actors want to (jointly) compute a function without revealing their respective inputs to each other.
  Here, oblivious data structures have been used to allow for data structure operations with sublinear runtime~\cite{Toft11,Keller.Scholl14}.
\end{description}

\subsubsection{Security Definition}

In line with standard assumptions for oblivious algorithms~\cite{Goldreich.Ostrovsky96}, we assume the $w$-bit word RAM model of computation.
Let the random variable $\addr{\Call{Op}{x}}$ with
\begin{equation}
  \addr{\Call{Op}{x}} \in \paren*{\set{0, \ldots, 2^w - 1} \times \set{\Call{Read}{}, \Call{Write}{}}}^*
\end{equation}
denote the sequence of \emph{memory probes} for \Call{Op}{$x$}, \ie the sequence of memory access locations and memory operations performed by operation \Call{Op}{} for input $x$.
Access to a constant number of registers (\emph{private memory}) is excluded from the probe sequence.

For \emph{perfect} oblivious security, we require that all data structure operation sequences of length $n$ produce the same memory access pattern:

\begin{definition}[Obliviousness with Perfect Security]%
  \label{def:PerfectSecurity}
  We say that an (online) data structure $\mathcal{D}_N$ with capacity\footnote{%
    To hide the type of operation performed, in particular for intermixed \Call{Insert}{} and \Call{Delete}{} sequences, it is assumed that the data structure has a fixed capacity $N$ determined a priori.
    This assumption does not limit any of our analyses, as the capacity can be adjusted using standard (doubling) techniques with (amortized) constant asymptotic overhead per operation.
  } $N$ and operations $\Call{Op}[1]{}, \ldots, \Call{Op}[m]{}$ is oblivious with perfect security iff, for every two sequences of $n$ operations
  \begin{equation*}
    X = \tuple{\Call{Op}[i_1]{x_1}, \ldots, \Call{Op}[i_n]{x_n}}
    \quad\text{and}\quad
    Y = \tuple{\Call{Op}[j_1]{y_1}, \ldots, \Call{Op}[j_n]{y_n}}
  \end{equation*}
  with valid inputs $x_k$, $y_k$, the memory probe sequences are identically distributed, \ie
  \begin{equation*}
    \tuple{\addr{\Call{Op}[i_1]{x_1}}, \ldots, \addr{\Call{Op}[i_n]{x_n}}}
    \equiv
    \tuple{\addr{\Call{Op}[j_1]{y_1}}, \ldots, \addr{\Call{Op}[j_n]{y_n}}}
    \,\text{.}
  \end{equation*}
\end{definition}

The requirement of identical distribution in the above definition can be relaxed to strictly weaker definitions of security by either allowing a negligible statistical distance of the probe sequences (\emph{statistical security}) or allowing a negligible distinguishing probability by a polynomial-time adversary (\emph{computational security});
see Asharov\ea~\cite{Asharov.Komargodski.ea22} for more details.

\Cref{def:PerfectSecurity} immediately implies that the memory probe sequence is independent of the operation arguments --- and, by extension, the data structure contents --- as well as the operations performed (\emph{operation-hiding security}).
As a technical remark, we note that for perfectly-secure data structure operations with determined outputs, the joint distributions of output and memory probe sequence are also identically distributed.
This implies that data structures satisfying \cref{def:PerfectSecurity} are universally composable~\cite{Asharov.Komargodski.ea22}.

\subsubsection{Offline ORAM}

The (online) ORAM is essentially an oblivious array data structure~\cite{Larsen.Nielsen18}.
By using an ORAM as the main memory, any RAM program can generically be transformed into an oblivious program at the cost of an \emph{overhead} per memory access.

The offline ORAM we are considering here, however, is given the sequence $I$ of access locations in advance.
While this allows pre-computations on $I$, the probe sequence must still hide the operations and indices in $I$.
In anticipation of the offline ORAM construction in \cref{sec:OfflineOram}, we take a similar approach as Mitchell and Zimmerman~\cite{Mitchell.Zimmerman14} and define an offline ORAM as an \emph{online} oblivious data structure with additional information:

\begin{definition}[Offline ORAM]%
  \label{def:OfflineORAM}
  An offline ORAM is an oblivious data structure $\mathcal{D}_N$ that maintains an array of length $N$ under an annotated online sequence of read and write operations:
  \begin{description}
    \item[\Call{Read}{$i, \tau$}] Return the value stored at index $i$ in the array.
    \item[\Call{Write}{$i, v', \tau$}] Store the value $v'$ in the array at index $i$.
  \end{description}
  The annotation $\tau$ indicates the time-stamp of the next operation accessing index $i$.
\end{definition}

Note that this definition implies that $\mathcal{D}_N$ can also be used in an online manner if the time-stamps $\tau$ of the next operation accessing the index $i$ are known.
When discussing the offline ORAM construction in \cref{sec:OfflineOram}, we show how to use sorting and linear scans to compute the annotations $\tau$ from the sequence $I$ of access locations given in advance.

\begin{table}[t]\centering
  \begin{threeparttable}
    \begin{tabular*}{\textwidth}{@{\extracolsep\fill} c c c c l}
      \toprule
      \thead{security} & \thead{runtime} & \thead{priv.\ memory} & \thead{deletion} & \\
      \midrule
      perfect\tnote*{p} & $\Oh(\log^2 N)$\tnote*{a} & $\Oh(1)$ & no & \cite{Toft11} \\
      statistical & $\Oh(\log^2 N)$ & $\Oh(\om(1) \mult \log N)$ & no & \cite{Wang.Nayak.ea14} \\
      statistical & $\Oh(\log^2 N)$ & $\Oh(\om(1) \mult \log N)$ & yes\tnote*{r} & \cite[Path ORAM variant]{Keller.Scholl14} \\
      perfect & $\Oh(\log^2 N)$\tnote*{a} & $\Oh(1)$ & no & \cite{Mitchell.Zimmerman14} \\
      statistical & $\Oh(\log N)\tnote*{a}$ & $\Oh(\om(1) \mult \log N)$ & yes & \cite{Jafargholi.Larsen.ea21} \\
      statistical & $\Oh(\om(1) \mult \log N)$ & $\Oh(1)$ & yes\tnote*{r} & \cite[Circuit variant]{Shi20} \\
      perfect & $\Oh(\log^2 N)$\tnote*{a} & $\Oh(1)$ & yes\tnote*{r} & \cite{Ichikawa.Ogata23} \\
      \midrule
      perfect & $\Oh(\log N)$\tnote*{a} & $\Oh(1)$ & no & \textbf{new} \\
      \bottomrule
    \end{tabular*}
    \begin{tablenotes}[para]
      \item[p] reveals the operation
      \item[a] amortized runtime complexity
      \item[r] requires an additional reference
    \end{tablenotes}
    \caption{%
      Oblivious priority queues supporting \Call{Insert}{}, \Call{Min}{}, and \Call{DeleteMin}{}.
      Deletions are noted as supported if an operation \Call{Delete}{}, \Call{ModifyPriority}{}, or \Call{DecreasePriority}{} is available.}%
  \label{tbl:PriorityQueueConstructions}
  \end{threeparttable}
\end{table}

\subsection{Previous Work}%
\label{sec:PreviousWork}

\subparagraph{Oblivious Priority Queues}

Because of their many algorithmic applications, oblivious priority queues have been considered in a number of previous works.
We provide an overview of previous oblivious priority queue constructions in \cref{tbl:PriorityQueueConstructions}.

Jacob\ea~\cite{Jacob.Larsen.ea19} show that a runtime of $\Om(\log N)$ per operation is necessary for oblivious priority queues.
Their lower bound holds even when allowing a constant failure probability and relaxing the obliviousness to statistical or computational security.

The first oblivious priority queue construction due to Toft~\cite{Toft11} is perfectly-secure and has an amortized runtime of $\Oh(\log^2 N)$, but reveals the operation performed and lacks operations to delete or modify arbitrary elements.
Subsequent perfectly-secure constructions~\cite{Mitchell.Zimmerman14,Ichikawa.Ogata23} offer operation-hiding security or support additional operations, but do not improve the suboptimal $\Oh(\log^2 N)$ runtime.
A different line of work considers oblivious priority queues with statistical security.
Jafargholi\ea~\cite{Jafargholi.Larsen.ea21} and, subsequently, Shi~\cite{Shi20} both present constructions with an optimal $\Th(\log N)$ runtime.
All statistically secure priority queue constructions~\cite{Wang.Nayak.ea14,Keller.Scholl14,Jafargholi.Larsen.ea21,Shi20} are randomized;
many~\cite{Wang.Nayak.ea14,Keller.Scholl14,Shi20} also rely on tree-based ORAMs (\eg \emph{Path ORAM}~\cite{Stefanov.Dijk.ea13} or \emph{Circuit ORAM}~\cite{Chan.Shi17}) in a non--black-box manner.

\subparagraph{Offline ORAMs}

\begin{table}[t]\centering%
  \begin{threeparttable}
    \begin{tabular*}{\textwidth}{@{\extracolsep\fill} c c c c c c c}
      \toprule
      & \multicolumn{2}{c}{\thead{perfect security}} & \multicolumn{2}{c}{\thead{statistical security}} & \multicolumn{2}{c}{\thead{comput.\ security}} \\
      \midrule
      \multirow{2}{*}{\thead{online}} & $\Om(\log N)$ & \cite{Larsen.Nielsen18} & $\Om(\log N)$ & \cite{Larsen.Nielsen18} & $\Om(\log N)$ & \cite{Larsen.Nielsen18} \\
      & $\Oh(\log^3 N / \log\log N)$ & \cite{Chan.Shi.ea21} & $\Oh(\log^2 N)$ & \cite{Chan.Shi17} & $\Oh(\log N)$\tnote*{p} & \cite{Asharov.Komargodski.ea23} \\
      \addlinespace[.5ex]%
      \cmidrule{2-7}%
      \addlinespace[.5ex]%
      \multirow{3}{*}{\thead{offline}} & $\Om(\log N)$\tnote*{i} & \cite{Goldreich.Ostrovsky96} & $\Om(\log N)$\tnote*{i} & \cite{Goldreich.Ostrovsky96,Boyle.Naor16} & $\Om(1)$ & trivial~\cite{Boyle.Naor16} \\
      & $\Oh(\log^2 N)$\tnote*{a} & \eg via~\cite{Mitchell.Zimmerman14} & $\Oh(\om(1) \mult \log N)$ & \cite{Shi20} & $\Oh(\log N)$\tnote*{p} & \cite{Asharov.Komargodski.ea23} \\
      & $\Oh(\log N)$\tnote*{a} & \textbf{new} & & & & \\
      \bottomrule
    \end{tabular*}
    \begin{tablenotes}[para]
      \item[p] assuming a pseudo-random function family
      \item[i] assuming indivisibility~\cite{Boyle.Naor16}
      \item[a] amortized
    \end{tablenotes}
    \caption{%
      Best known overhead bounds for online and offline ORAMs with $N$ memory cells, a constant number of private memory cells, and standard parameters~\cite{Larsen.Nielsen18}.}%
  \label{tbl:OfflineORAMConstructions}
  \end{threeparttable}
\end{table}

Though much of the research focuses on online ORAMs, \emph{offline} ORAMs have been explicitly considered in some previous works~\cite{Mitchell.Zimmerman14,Boyle.Naor16,Jafargholi.Larsen.ea21,Shi20}.
We provide an overview of the best known upper and lower bounds for both online and offline ORAM constructions with perfect, statistical, or computational security in \cref{tbl:OfflineORAMConstructions}.

Goldreich and Ostrovsky~\cite{Goldreich.Ostrovsky96} prove a lower bound on the overhead of $\Om(\log N)$ for (online) ORAMs with perfect security (assuming indivisibility).
This bound also applies to offline ORAMs and constructions with statistical security~\cite{Boyle.Naor16}.

There is a generic way to construct offline ORAMs from oblivious priority queues (see \cref{sec:OfflineOram}).
Via their priority queue construction, Shi~\cite{Shi20} obtains an optimal offline ORAM with statistical security for a private memory of constant size.
For computational security, the state-of-the-art online ORAM construction~\cite{Asharov.Komargodski.ea23} is simultaneously the best known offline construction (asymptotically).
While the upper bounds for statistical and computational security match the (conjectured) $\Om(\log N)$ lower bound, prior to our work there remained a gap for perfect security.\footnote{%
  Boyle and Naor~\cite{Boyle.Naor16} show how to construct an offline ORAM with overhead $\Oh(\log N)$:
  In addition to the access locations, their construction must be given the \emph{operands} of the write operations in advance, \ie the sequence of values to be written.
  It thus does not fit our more restrictive \cref{def:OfflineORAM}.}

\subsection{Contributions}

Our work provides several contributions to a better understanding of the upper bounds of perfectly-secure oblivious data structures:
\begin{itemize}
  \item As a main contribution, we present and analyze an oblivious priority queue construction with perfect security.
  This construction is conceptually simple and achieves the optimal $\Th(\log N)$ runtime per operation amortized.

  In particular, our construction improves over the previous statistically-secure constructions~\cite{Jafargholi.Larsen.ea21,Shi20} in that we eliminate the failure probability (perfect security with perfect correctness) and achieve a strictly-logarithmic runtime for $\Oh(1)$ private memory cells.\footnote{%
    For a private memory of constant size, the construction of Shi~\cite{Shi20} requires an additional $\om(1)$-factor in runtime to achieve a negligible failure probability.}

  \item The priority queue implies an optimal $\Th(\log N)$-overhead offline ORAM with perfect security, closing the gap to statistical and computational security in the offline setting.
  We show that these bounds hold even for a large number $n$ of operations, \ie $n = N^{\om(1)}$.

  \item We also provide improved external-memory oblivious priority queues:
  Compared to the I/O-optimal state-of-the-art~\cite{Jafargholi.Larsen.ea21}, our cache-aware construction achieves perfect security and only requires a private memory of constant size.

  In the cache-oblivious setting, our construction achieves near-optimal I/O-complexity for perfect security and a private memory of constant size.
  We are not aware of any previous oblivious priority queues in the cache-oblivious setting.
\end{itemize}


\section{Oblivious Priority Queue from Oblivious Partitioning}%
\label{sec:ObliviousPriorityQueue}

An oblivious priority queue data structure maintains up to $N$ elements and must support at least three non-trivial operations prescribed by the abstract data type \textsf{PriorityQueue}:
\begin{description}
  \item[\Call*{Insert}{$k, p$}] Insert the element $\tuple{k, p}$ with \emph{priority} $p$.
  \item[\Call*{Min}{}] Return the element $\tuple{k, p_{\text{min}}}$ with the minimal priority $p_{\text{min}}$.
  \item[\Call*{DeleteMin}{}] Remove the element $\tuple{k, p_{\text{min}}}$ with minimal priority $p_{\text{min}}$.
\end{description}
We assume that both the key $k$ and the priority $p$ fit in a constant number of memory cells and that the relative order of two priorities $p, p'$ can be determined obliviously in constant time;
larger elements introduce an overhead factor in the runtime.
To keep the exposition simple, we assume distinct priorities.
This assumption can be removed easily, see \cref{sec:NonDistinctPriorities}.

\Cref{fig:PriorityQueueStructure} shows the structure of our solution:
In a standard data structure layout, it has $\ell \in \Th(\log N)$ levels of geometrically increasing size.
Each level $i$ consists of a \emph{down}-buffer $D_i$ and an \emph{up}-buffer $U_i$, both of size $\Th(2^i)$.
\Call{Insert}{} inserts into the up-buffer $U_0$ and \Call{DeleteMin}{} removes from the down-buffer $D_0$.
Each level $i$ is rebuilt after $2^i$ operations, moving elements up through the up-buffers and back down through the down-buffers.
The main idea guiding the rebuilding is to ensure that all levels $j < i$ can support the next operations until level $i$ is rebuilt;
we later formalize this as an invariant for the priority queue (see \cref{lma:Invariants}).

\begin{figure}[t]\centering

\begin{tikzpicture}[
  node distance=.2em and 1em,
  block/.style={
    draw,rectangle,
    minimum height=1.5em,
    minimum width=#1,
  },
  caption/.style={
    inner sep=0,
  },
]

  \node[block=4em] (D0) {$D_0$};
  \node[block=2em,below=of D0] (U0) {$U_0$};
  \node[caption,above=of D0] {level $0$};

  \node[block=4em,right=of D0] (D1) {$D_1$};
  \node[block=2em,below=of D1] (U1) {$U_1$};
  \node[caption,above=of D1] {level $1$};

  \node[block=8em,right=of D1] (D2) {$D_2$};
  \node[block=4em,below=of D2] (U2) {$U_2$};
  \node[caption,above=of D2] {level $2$};

  \node[block=14em,right=of D2] (D3) {$D_3$};
  \node[block=7em,below=of D3] (U3) {$U_3$};
  \node[caption,above=of D3] {level $3$};

  \node[right=of D3] (Di) {$\cdots$};
  \node[right=4.5em of U3] {$\cdots$};
\end{tikzpicture}%
  \caption{%
    Structure of the oblivious priority queue:
    Each level $i \in \set{0, \ldots, \ell - 1}$ consists of a down-buffer $D_i$ and an up-buffer $U_i$ half the size of $D_i$.}%
  \label{fig:PriorityQueueStructure}
\end{figure}

\subparagraph{Oblivious Building Blocks}%
\label{sec:ObliviousBuildingBlocks}

For our construction, we need an algorithm to obliviously permute a given array $A$ of $n$ elements such that the $k$ smallest elements are swapped to the front, followed by the remaining $n - k$ elements.
We refer to this problem as \emph{$k$-selection}.

To obtain an efficient algorithm for $k$-selection, we use an oblivious modification of the classical (RAM) linear-time selection algorithm~\cite{Blum.Floyd.ea73} as sketched by Lin\ea~\cite[full version, Appendix~E.2]{Lin.Shi.ea19}.
This reduces $k$-selection to the \emph{partitioning} problem (also called \emph{$1$-bit sorting}).
The oblivious $k$-selection deviates from the classical algorithm in two respects~\cite{Lin.Shi.ea19}:
\begin{itemize}
  \item First, it is necessary to ensure that the partitioning step is oblivious.
  For this, instead of the algorithms proposed by Lin\ea, we use the optimal oblivious partitioning algorithm of Asharov\ea~\cite[Theorem~5.1]{Asharov.Komargodski.ea22}.\footnote{%
    Note that Asharov\ea\ refer to the partitioning problem as \emph{compaction}.
    We use the term partitioning to stress that all elements of the input are preserved which is necessary for our definition of $k$-selection.}
  This allows us to obtain a linear-time algorithm for $k$-selection.

  \item Second, the relative position of the median of medians among the elements cannot be revealed as this would leak information about the input.
  To address this, Lin\ea\ propose over-approximating the number of elements and always recursing with approximately $\frac{7 n}{10}$ elements.
\end{itemize}

\begin{corollary}[Oblivious $k$-Selection via~\cite{Lin.Shi.ea19,Asharov.Komargodski.ea22}]\label{cor:ObliviousKSelection}
  There is a deterministic, perfectly-secure oblivious algorithm for the $k$-selection problem with runtime $\Oh(n)$ for $n$ elements.
\end{corollary}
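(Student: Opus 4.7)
The plan is to \emph{obliviously} implement the classical BFPRT median-of-medians selection algorithm and then convert its output into the desired $k$-selection layout. Given an input array $A$ of $n$ elements and a target rank $k$, I would partition $A$ into groups of $5$, obliviously sort each group in constant time using a small sorting network, and collect the group medians into an auxiliary array $M$ of length $\lceil n / 5 \rceil$. A recursive call returns the median of $M$, yielding a pivot $p$ whose true rank in $A$ is provably contained in $[3n/10,\, 7n/10]$. I would then invoke the oblivious partitioning primitive of Asharov\ea~\cite{Asharov.Komargodski.ea22} to split $A$ around $p$ in time $\Oh(n)$, placing elements $< p$ before elements $\ge p$ while preserving all data.

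The crux of obliviousness is ensuring that the true rank of $p$ does not influence the access pattern of the following recursive call. Following the sketch of Lin\ea~\cite{Lin.Shi.ea19}, I would always recurse on a contiguous block of length $\lceil 7n/10 \rceil$ whose position depends only on $n$ and $k$: if $k \le \lceil 7n/10 \rceil$, recurse on the first $\lceil 7n/10 \rceil$ elements with unchanged target rank; otherwise recurse on the last $\lceil 7n/10 \rceil$ elements with target rank shifted accordingly. The pivot-rank bound $[3n/10,\, 7n/10]$ guarantees that, regardless of the actual position of $p$, the true $k$-th smallest element lies inside the fixed-size block selected in this way, so correctness of the BFPRT recursion is preserved. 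Once the recursion terminates and the identified pivot equals the $k$-th smallest element, a single final call to oblivious partitioning around this pivot produces the $k$-selection output: the $k$ smallest elements in front, followed by the remaining $n - k$ elements.

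The runtime analysis mirrors the classical BFPRT argument with the slightly larger constant $7/10$ in both branches unchanged, giving $T(n) = T(\lceil n/5 \rceil) + T(\lceil 7n/10 \rceil) + \Oh(n) = \Oh(n)$. Perfect security follows because every subroutine (the sorting networks on groups of $5$, the oblivious partitioning, and the index arithmetic choosing the recursive subblock) is deterministic and data-oblivious, and all recursion sizes and offsets are functions of $n$ and $k$ alone. The main obstacle I anticipate is a careful verification of the boundary conditions for the over-approximated recursion---in particular, confirming that across all inputs and all rounding cases the fixed $\lceil 7n/10 \rceil$-sized block always contains the $k$-th smallest element and that the shifted target rank remains a valid selection index. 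Everything else is a direct transcription of the standard BFPRT analysis into the oblivious setting.
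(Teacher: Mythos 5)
There is a genuine gap in the step that selects the recursive sub-block. You decide which block to recurse on by comparing $k$ with the fixed threshold $\ceil{7n/10}$, and you claim the pivot-rank bound guarantees the $k$-th smallest element lies in that block. This is false: after partitioning around the pivot $p$, the array consists of the $c$ elements smaller than $p$ (in positions $0,\ldots,c-1$, unordered) followed by the $n-c$ elements $\geq p$ (unordered), where $c$ is the \emph{data-dependent} count of elements below the pivot. The rank bound only tells you $c \in [3n/10 - \Oh(1),\, 7n/10 + \Oh(1)]$; it says nothing about where the rank-$k$ element sits relative to position $\ceil{7n/10}$. Concretely, take $n = 100$, $c = 50$, $k = 60$: your rule recurses on the first $70$ positions, but the rank-$60$ element lies somewhere among positions $50,\ldots,99$ (the $\geq p$ part is not sorted) and may well be at position $99$, outside your block. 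The correct branching criterion is $k < c$ versus $k \geq c$, which is data-dependent, so it cannot be implemented by choosing between two differently-positioned blocks based on $k$ and $n$ alone.

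The paper resolves exactly this point differently: it always recurses on the fixed prefix of length $\floor{(7n+20)/10}$, and makes the data-dependent choice oblivious by \emph{conditionally inverting} the partitioned array when $k \geq c$ (performing dummy swaps when $k < c$), so that whichever side contains the target is moved into that fixed prefix; the target rank is then shifted by $\ceil{(3n-20)/10}$ only in the inverted case. Your group-of-5 median computation, the use of the partitioning algorithm of Asharov\ea\ for the pivot split, the recurrence $T(n) = T(\ceil{n/5}) + T(\approx 7n/10) + \Oh(n)$, and the final partition around the returned element all match the paper; what is missing is the oblivious count-and-invert mechanism (and, as a secondary point, the careful $\pm\Oh(1)$ bookkeeping showing the surviving side always fits in a prefix of length exactly $\floor{(7n+20)/10}$, since $c$ can slightly exceed $\ceil{7n/10}$). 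Without that mechanism the recursion is simply incorrect on a large class of inputs.
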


We describe the algorithm in detail and prove its correctness in \cref{sec:ObliviousKSelection}.

\subparagraph{Comparison with Jafargholi\ea~\cite{Jafargholi.Larsen.ea21}}

Conceptually, our construction is similar to that of Jafargholi\ea~\cite{Jafargholi.Larsen.ea21}:
In both constructions, the priority queue consists of levels of geometrically increasing size with lower-priority elements moving towards the smaller levels.
Structuring the construction so that larger levels are rebuilt less frequently is a standard data structure technique to amortize the cost of rebuilding.

The main difference lies in the rebuilding itself:
In the construction of Jafargholi\ea, level $i$ is split into $2^i$ nodes;
overall, the levels form a binary tree.
The elements are then assigned to paths in the tree based on their key~\cite{Jafargholi.Larsen.ea21}.
While this allows deleting elements by their key efficiently, this inherently introduces the probability of \enquote{overloading} certain nodes, reducing the construction to statistical security (with a negligible failure probability).

We instead use $k$-selection for rebuilding;
this allows us to maintain both perfect correctness and security.
Unfortunately, this comes at the cost of a more expensive \Call{Delete}{} operation:
Since we maintain no order on the keys within each level, it is not possible to efficiently delete arbitrary elements by their key.
We note that deleting arbitrary elements is not required for our offline ORAM construction.

\subsection{Details of the Construction}%
\label{sec:ObliviousPriorityQueueDetails}

The priority queue consists of $\ell \coloneqq \ceil{\log_2 N}$ levels, each with a down-buffer $D_i$ of $2^{\max\set{1, i}}$ elements and an up-buffer $U_i$ of $2^{\max\set{0, i - 1}} = \frac{\abs{D_i}}{2}$ elements.
An element is a pair $\tuple{k, p}$ of key $k$ and priority $p$;
each buffer is padded with \emph{dummy} elements to hide the number of \enquote{real} elements.
Initially, all elements in the priority queue are dummy elements.
We refer to a buffer containing only dummy elements as \emph{empty}.

The elements are distributed over the levels via a rebuilding procedure:
Level $i$ is rebuilt after exactly $2^i$ operations.
Let $\Delta_i$ be the remaining number of operations until level $i$ is rebuilt (with $\Delta_i = 2^i$ initially).
After each operation, all counters $\Delta_i$ are decremented by one and all levels $i$ with $\Delta_i = 0$ are rebuilt with \Call{Rebuild}{$m$} for $m \coloneqq \max\setCond{i < \ell}{\Delta_i = 0}$;
note that $\Delta_i = 0$ if and only if $i \leq m$.
The counter $\Delta_i$ of each rebuilt level $i \leq m$ is reset to $2^i$, so $\Delta_i > 0$ for every level $i$ after each operation.

We will show the correctness of the construction with three \cref{lma:Invariants:CorrectElements,lma:Invariants:U0Empty,lma:Invariants:ElementRanks}:

\begin{lemma}[Invariants]\label{lma:Invariants}
  Before each operation of the priority queue, the following holds:
    \begin{enumerate}[(a)]
      \item The priority queue contains the correct elements, \ie $\mathcal{E} = \bigcup_{i < \ell} \paren{U_i \cup D_i}$ where $\mathcal{E}$ denotes the elements that should be contained in the priority queue (with standard semantics).%
        \label[inv]{lma:Invariants:CorrectElements}
      \item The up-buffer $U_0$ is empty, \ie contains exactly one dummy element.%
        \label[inv]{lma:Invariants:U0Empty}
      \item For all elements $e \coloneqq \tuple{k, p} \in D_i \cup U_i$ with $i \geq 1$, it holds that $\Delta_i \leq \rank(e)$ where $\rank(\tuple{\arb[k], p}) \coloneqq \abs*{\setCond{\tuple{\arb[k'], p'} \in \mathcal{E}}{p' < p}}$ is the (unique) rank of $p$ in the priority queue.%
        \label[inv]{lma:Invariants:ElementRanks}
    \end{enumerate}
\end{lemma}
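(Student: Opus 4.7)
The plan is to prove the three invariants jointly by induction on the number of priority queue operations processed. The base case is immediate: initially $\mathcal{E} = \emptyset$, every buffer is padded with dummies, and $\Delta_i = 2^i > 0$ for every $i$, so (a) holds trivially, (b) holds since $U_0$ contains its single dummy, and (c) is vacuous. For the inductive step, I would decompose one round into three stages: (i) the externally requested \Call{Insert}{}, \Call{Min}{}, or \Call{DeleteMin}{}; (ii) the decrement of every counter $\Delta_i$; and (iii) the call to \Call{Rebuild}{m} with $m = \max\setCond{i < \ell}{\Delta_i = 0}$.

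For stage (i), invariant (b) at the start of the round gives \Call{Insert}{k, p} the single free slot of $U_0$ in which to deposit the new element, and invariant (c) at level $1$ guarantees that $\Delta_1 \geq 1$ forces every element of $D_1 \cup U_1$ to have rank at least $1$; combined with (b), this places the global minimum in $D_0$, so \Call{Min}{} and \Call{DeleteMin}{} can be served by an oblivious scan of $D_0$. Invariant (a) is preserved by construction in all three cases. For stage (ii), (a) and (b) are unaffected by decrementing counters. For (c) at any level $i \geq 1$ that is not about to be rebuilt (so $\Delta_i^{\mathrm{new}} = \Delta_i^{\mathrm{old}} - 1 > 0$), a brief case distinction on the operation shows $\rank(e) \geq \rank_{\mathrm{old}}(e) - 1$, with a decrease only under \Call{DeleteMin}{}; combining this with $\rank_{\mathrm{old}}(e) \geq \Delta_i^{\mathrm{old}}$ yields $\rank(e) \geq \Delta_i^{\mathrm{new}}$ as required.

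Stage (iii) is the substantive step and the expected obstacle. Since \Call{Rebuild}{m} merely permutes the multiset of real elements held in the buffers, invariant (a) is preserved. The heart of the proof is to extract from the rebuild routine the guarantees that $U_0$ becomes empty (re-establishing (b)) and that every element placed in $D_i \cup U_i$ for $i \in \set{1, \ldots, m}$ has global rank at least $2^i = \Delta_i^{\mathrm{reset}}$ (re-establishing (c) at the rebuilt levels). I expect the second guarantee to reduce, via the oblivious $k$-selection of \Cref{cor:ObliviousKSelection}, to a statement about ranks \emph{within} the pool of elements collected by \Call{Rebuild}{m}; translating these local ranks into the global ranks required by (c) will rely on the old invariant (c) at the un-rebuilt level $m+1$, which certifies that no element outside the pool can be smaller than any element that ends up in a rebuilt level. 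The main technical obstacle will be the accounting argument showing that the number of real elements in the pool matches the real capacity of the rebuilt levels, so that $k$-selection cleanly slices them into the prescribed buffer ranges with only dummy slack spilling back into the un-rebuilt region.
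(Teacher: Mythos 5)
Your overall structure --- induction over operations, the per-operation arguments for \Call{Insert}{}, \Call{Min}{}, and \Call{DeleteMin}{}, and the observation that counters and ranks decrease in lockstep --- matches the paper's proof. However, your treatment of \Call{Rebuild}{$m$} has one genuine gap and one misdirected step.

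The gap: you assert that \Call{Rebuild}{$m$} ``merely permutes the multiset of real elements held in the buffers,'' but this is precisely what must be proven for \cref{lma:Invariants:CorrectElements}. \Cref{alg:Rebuild:Copy} overwrites $U_{m+1}$ with the contents of $U_{0 \ldots m}$; if $U_{m+1}$ held real elements at that moment, they would be destroyed. The paper closes this with a periodicity argument: either level $m+1$ has never been touched, or $U_{m+1}$ was emptied exactly $2^m$ operations earlier by some \Call{Rebuild}{$m'$} with $m' > m$ and has not been written to since. You also need the separate case $m = \ell - 1$, where there is no $U_{m+1}$ to spill into and the capacity bound $2^\ell \geq N$ must be invoked to conclude that all real elements fit into $D_{0 \ldots m}$, so that the up-buffers (in particular $U_0$) end up empty. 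Your plan addresses neither point.

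The misdirected step: to lift pool-local ranks to global ranks you invoke the old \cref{lma:Invariants:ElementRanks} at level $m + 1$, claiming it certifies that no element outside the pool can be smaller than any element that ends up in a rebuilt level. That claim is false --- an element at level $m+1$ need only have rank at least $\Delta_{m+1}$, which can be well below the ranks of elements landing in $D_m$ --- and, more importantly, it is not needed. Since $\rank$ counts strictly smaller elements over all of $\mathcal{E}$, having $2^i$ strictly smaller elements \emph{within the pool} already gives $\rank(e) \geq 2^i = \Delta_i$ after the reset; elements outside the pool can only increase the global rank. Relatedly, the ``accounting'' obstacle you anticipate is a non-issue: there is no requirement that the number of real elements match the capacity of the rebuilt levels, since dummies (ordered after all real elements) absorb any slack, and real elements do legitimately spill into $U_{m+1}$ --- that is exactly how they migrate to larger levels.
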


The most important \cref{lma:Invariants:ElementRanks} guarantees that each level $i \geq 1$ is rebuilt before any of its elements are required for \Call{Min}{}/\Call{DeleteMin}{} in $D_0$.
In turn, this implies that the $\Delta_i$ smallest elements potentially required before rebuilding level $i$ are stored in the buffers on levels $0, \ldots, i - 1$.
Formally, this follows since $\Delta_j \leq \Delta_i$ for all $j < i$.

\begin{figure}[t]\centering

\begin{tikzpicture}[
  node distance=.2em and 1em,
  block/.style={
    draw,rectangle,
    minimum height=1.5em,
    minimum width=#1,
  },
  usedBlock/.style={
    block=#1,
    pattern=north east lines,pattern color=black!30,
  },
  unusedBlock/.style={
    block=#1,
    gray,
  },
  halfUsedBlock/.style={
    block=#1,
    path picture={\fill[black!10] (path picture bounding box.south west) rectangle (path picture bounding box.north);},
  },
  caption/.style={
    inner sep=0,
    align=center,
  },
  brace/.style={
    decorate,
    decorate,decoration={brace,amplitude=6pt},
  },
  braceCaptionAbove/.style={
    caption,
    midway,
    above,yshift=8pt,
  },
  braceCaptionBelow/.style={
    caption,
    midway,
    below,yshift=-8pt,
  },
]

  \node[usedBlock=4em] (D0) {$D_0$};
  \node[block=2em,below=of D0] (U0) {$U_0$};

  \node[usedBlock=4em,right=of D0] (D1) {$D_1$};
  \node[block=2em,below=of D1] (U1) {$U_1$};
  \path (D0) -- (D1) node[midway] {$<$};

  \node[right=of D1] (D2) {$\cdots$};
  \node[right=2em of U1] (U2) {$\cdots$};
  \path (D1) -- (D2) node[midway] {$<$};

  \node[usedBlock=8em,right=of D2] (D3) {$D_m$};
  \node[block=4em,below=of D3] (U3) {$U_m$};
  \path (D2) -- (D3) node[midway] {$<$};

  \draw[thick,dashed,gray] ($(U0.south west) + (-1.3em,-.3em)$) rectangle ($(D3.north east) + (.3em,.3em)$);

  \node[unusedBlock=14em,right=of D3] (D4) {$D_{m + 1}$};
  \node[usedBlock=7em,below=of D4] (U4) {$U_{m + 1}$};

  \node[gray,right=of D4] (D5) {$\cdots$};
  \node[gray,right=4.5em of U4] {$\cdots$};

  \draw[brace] ($(D0.north west) + (0,.6em)$) -- ($(D3.north east) + (0,.6em)$) node[braceCaptionAbove] {elements with ranks $0 \ldots 2^{m + 1} - 1$};
  \draw[brace] ($(U4.south east) + (0,-.3em)$) -- ($(U4.south west) + (0,-.3em)$) node[braceCaptionBelow] {rem.\ elements};
\end{tikzpicture}%
  \caption{%
    Distribution of the elements when rebuilding level $m$:
    The up to $2^{m + 1}$ smallest elements in the levels $0, \ldots, m$ are distributed over the down-buffers of the first $m$ levels.
    The up to $2^m$ remaining elements are inserted into the (empty) up-buffer $U_{m + 1}$ of level $m + 1$.}%
  \label{fig:PriorityQueueRebuild}
\end{figure}

For simplicity of exposition, we ignore the details of the rebuilding step for the time being and discuss how the three priority queue operations can be implemented while maintaining the \cref{lma:Invariants:CorrectElements,lma:Invariants:ElementRanks} of \cref{lma:Invariants}:

\begin{description}
  \item[\Call{Insert}{$k, p$}]
    The dummy in $U_0$ is replaced with the new element $\tuple{k, p}$.
    After the operation, the priority queue contains the elements $\mathcal{E}' = \mathcal{E} \cup \set{\tuple{k, p}} = \bigcup_{i < \ell} \paren{U_i \cup D_i}$.
    Inserting a new element does not decrease the rank of any element while all counters $\Delta_i$ decrease;
    this implies that \cref{lma:Invariants:ElementRanks} is maintained.

  \item[\Call*{Min}{}]
    The minimal element $e_{\text{min}}$ with $\rank(e_{\text{min}}) = 0$ must be contained in level $0$ since $\Delta_i > 0$ for all $i > 0$ before each operation.
    Since $U_0$ is empty, $e_{\text{min}}$ is one of the two elements in $D_0$.
    After the operation, the elements $\mathcal{E}' = \mathcal{E} = \bigcup_{i < \ell} \paren{U_i \cup D_i}$ remain the same.
    \Cref{lma:Invariants:ElementRanks} is maintained since the ranks of all elements $e \in \mathcal{E}$ remain unchanged.

  \item[\Call*{DeleteMin}{}]
    For this operation, we replace the minimal element $e_{\text{min}} \in D_0$ with a dummy element.
    After the operation, the priority queue contains the elements $\mathcal{E}' = \mathcal{E} \setminus \set{e_{\text{min}}} = \bigcup_{i < \ell} \paren{U_i \cup D_i}$.
    Removing the minimum reduces the rank of all other elements by one, but \cref{lma:Invariants:ElementRanks} is maintained since all counters $\Delta_i$ also decrease.
\end{description}

For the operation-hiding security, we access memory locations for all three operations but only perform updates for the intended operation.
For example, \Call{DeleteMin}{} will access both $U_0$ and $D_0$, but only actually overwrite the minimal element in $D_0$ with a dummy element.
We provide pseudocode for the operations in \cref{sec:ObliviousPriorityQueueOperations}.

We now turn to describing \Call{Rebuild}{$m$} (\cref{alg:Rebuild}).
As shown in \cref{fig:PriorityQueueRebuild}, this procedure processes all elements in the levels $0, \ldots, m$:
The non-dummy elements are distributed into $D_0, \ldots, D_m, U_{m + 1}$ and the up-buffers $U_0, \ldots, U_m$ are emptied, \ie filled with dummy elements.
The down-buffers $D_0, \ldots, D_m$ collectively contain up to $2^{m + 1}$ non-dummy elements.
Additionally, the up-buffers $U_0, \ldots, U_m$ collectively contain up to $2^m$ non-dummy elements.
All these elements are distributed over the buffers $D_0, \ldots, D_m$ and $U_{m + 1}$ such that
\begin{itemize}
  \item $D_0$ contains the two smallest elements (with ranks $0$ and $1$),
  \item the other $D_i$ (for $i \leq m$) each contain the elements with ranks $2^i, \ldots, 2^{i + 1} - 1$,\footnote{%
    Even if there are more than $2^{m + 1}$ elements in the priority queue overall, the buffer $D_m$ may still end up (partially) empty after rebuilding.
    This is no threat to the correctness, since \cref{lma:Invariants:ElementRanks} guarantees that level $m + 1$ will be rebuilt before requiring the elements with ranks $\geq 2^m$.%
  } and
  \item $U_{m + 1}$ contains all remaining elements.
\end{itemize}
For this, we order the elements by their priority $p$;
dummy elements have no priority and are ordered after non-dummy elements.

\begin{algorithm}[t]
  \caption{%
    Rebuild the levels $0, \ldots, m$ in the oblivious priority queue.
    Let $A \concat B$ denote the concatenation of two buffers $A$ and $B$;
    $A_{0 \ldots i}$ denotes the concatenation $A_0 \concat \cdots \concat A_i$.}%
  \label{alg:Rebuild}
  \begin{algorithmic}[1]
    \Procedure{Rebuild}{$m$}
      \State\Call{KSelect}{$2^{m + 1}, D_{0 \ldots m} \concat U_{0 \ldots m}$}%
        \label{alg:Rebuild:FirstSelect}%
        \Comment{move $2^{m + 1}$ smallest elements to $D_{0 \ldots m}$}
      \If{$m$ is not the last level}
        \State $U_{m + 1} \gets U_{0 \ldots m}$%
          \label{alg:Rebuild:Copy}%
          \Comment{copy the elements in $U_{0 \ldots m}$ to $U_{m + 1}$}
        \State $U_{0 \ldots m} \gets \tuple{\bot, \ldots, \bot}$%
          \Comment{overwrite $U_{0 \ldots m}$ with dummy elements}
      \EndIf
      \For{$i \gets m - 1, \ldots, 0$}
        \State\Call{KSelect}{$2^{i + 1}, D_{0 \ldots i + 1}$}
          \Comment{move $2^{i + 1}$ smallest elements to $D_{0 \ldots i}$}
      \EndFor
      \For{$i \gets 0, \ldots, m$}
        \State $\Delta_i \gets 2^i$
          \Comment{reset counters}
      \EndFor
    \EndProcedure
  \end{algorithmic}
\end{algorithm}

We can now prove that the overall construction is correct by showing that \Call{Rebuild}{$m$} with $m \coloneqq \max\setCond{i < \ell}{\Delta_i = 0}$ maintains the \cref{lma:Invariants:CorrectElements,lma:Invariants:U0Empty,lma:Invariants:ElementRanks}:

\begin{proof}[Proof of \cref{lma:Invariants}]
  All \namecrefs{lma:Invariants:CorrectElements} trivially hold for the empty priority queue.
  As described above, the operations \Call{Insert}{}, \Call{Min}{}, and \Call{DeleteMin}{} maintain \cref{lma:Invariants:CorrectElements,lma:Invariants:ElementRanks}.
  After each operation, all counters $\Delta_i$ are decremented and the levels $i$ with $\Delta_i = 0$ are rebuilt.
  We now show that all \namecrefs{lma:Invariants:CorrectElements} hold after rebuilding.

  \proofsubparagraph{\Cref{lma:Invariants:CorrectElements,lma:Invariants:U0Empty}: \boldmath$\mathcal{E} = \bigcup_{i < \ell} \paren{U_i \cup D_i}$ and \boldmath$U_0$ is empty}
  We first show that prior to \Call{Rebuild}{$m$} in each operation where $m$ is not the last level, the up-buffer $U_{m + 1}$ is empty.
  This can be seen by considering the two possible cases:
  \begin{itemize}
    \item If no more than $2^m$ operation have been performed overall, the level $m + 1$ has never been accessed.
      In this case $U_{m + 1}$ is empty since it was empty initially.

    \item Otherwise, if more than $2^m$ operations have been performed, the up-buffer $U_{m + 1}$ was emptied $2^m$ operations before by \Call{Rebuild}{$m'$} for some $m' > m$ (and not accessed since).
  \end{itemize}
  This means that by copying the elements in $U_{0 \ldots m}$ into $U_{m + 1}$ (\cref{alg:Rebuild:Copy}), only dummy elements are being overwritten and \cref{lma:Invariants:CorrectElements} is maintained.

  In case $m$ is the last level ($m = \ell - 1$), after \cref{alg:Rebuild:FirstSelect} the up-buffers $U_0, \ldots, U_m$ are empty iff there no more than $2^{m + 1} = 2^\ell \geq N$ elements in the data structure.
  This is guaranteed by the capacity bound $N$.
  Thus, the up-buffer $U_0$ is empty after each operation.

  \proofsubparagraph{\Cref{lma:Invariants:ElementRanks}: \boldmath$\Delta_i \leq \rank(e)$ for all $e \in U_i \cup D_i$ with $i \geq 1$}
  Next, we show that rebuilding maintains the rank invariant for all redistributed elements.
  Using $k$-selections to redistribute the elements makes sure that a buffer in level $i$ receives non-dummy elements only if all $D_0, \ldots, D_{i - 1}$ have been filled to capacity.
  Consider any level $i \geq 1$:
  If an element $e \coloneqq \tuple{k, p}$ is redistributed into level $i$, exactly $2^i = \sum_{j < i} \abs{D_j}$ elements $\tuple{\arb[k'], p'}$ with $p' < p$ must have been redistributed into lower levels, so $2^i \leq \rank(e)$.
  Thus, for all non-dummy elements $e$ inserted into a level $i \geq 1$, it holds that $\Delta_i \leq 2^i \leq \rank(e)$.
  For elements that remain in a level $i > m$, \cref{lma:Invariants:ElementRanks} is trivially maintained.
\end{proof}

With this, we obtain our perfectly-secure priority queue construction:

\begin{theorem}[Optimal Oblivious Priority Queue]\label{thm:ObliviousPriorityQueue}
  There is a deterministic, perfectly-secure oblivious priority queue with capacity $N$ that supports each operation in amortized $\Oh(\log N)$ time and uses $\Oh(N)$ space.
\end{theorem}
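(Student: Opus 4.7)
Correctness has essentially been established already: \Cref{lma:Invariants:CorrectElements} says the union of all buffers equals the intended multiset $\mathcal{E}$, \cref{lma:Invariants:U0Empty} guarantees that the minimum sits in $D_0$ (together with \cref{lma:Invariants:ElementRanks}, which places every element of rank $< \Delta_i$ in levels below~$i$), and the three operation descriptions in \cref{sec:ObliviousPriorityQueueDetails} show that \Call{Insert}{}, \Call{Min}{}, and \Call{DeleteMin}{} return the right answer and re-establish the invariants. I would begin the proof by invoking \cref{lma:Invariants} and briefly noting that \cref{lma:Invariants:ElementRanks} forces $e_{\text{min}} \in D_0$ before each \Call{Min}{}/\Call{DeleteMin}{}.

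For the \emph{space} bound, I would simply sum buffer sizes: $\sum_{i < \ell} (\abs{D_i} + \abs{U_i}) = \Th(\sum_{i < \ell} 2^i) = \Th(2^\ell) = \Th(N)$, where $\ell = \ceil{\log_2 N}$, and observe that the counters $\Delta_i$ and constant private-memory registers add only $\Oh(\log N)$ bits.

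The \emph{runtime} analysis is the standard geometric amortization argument. Each of \Call{Insert}{}, \Call{Min}{}, \Call{DeleteMin}{} by itself touches only $U_0$ and $D_0$ and costs $\Oh(1)$. A single call to \Call{Rebuild}{$m$} performs $\Oh(m)$ invocations of oblivious $k$-selection on buffers of total size $\Th(2^m)$; by \cref{cor:ObliviousKSelection} each such call is linear, so \Call{Rebuild}{$m$} runs in $\Oh(2^m)$ time (the geometric series in the nested calls still sums to $\Oh(2^m)$). Since level $i$ is rebuilt exactly once every $2^i$ operations, the amortized rebuilding contribution of level $i$ per operation is $\Oh(2^i / 2^i) = \Oh(1)$, and summing over the $\ell = \Th(\log N)$ levels yields the claimed amortized $\Oh(\log N)$ per operation.

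The step that deserves the most care, and which I regard as the key point, is \emph{perfect security}. Here I would argue that the entire control flow of the data structure depends only on the \emph{number} of operations performed so far, never on keys, priorities, or the operation type. Concretely: (i) each operation touches $U_0$ and $D_0$ at fixed locations regardless of whether it is \Call{Insert}{}, \Call{Min}{}, or \Call{DeleteMin}{} (the operation-hiding padding described after the operation list); (ii) the decision which levels to rebuild is driven by the counters $\Delta_i$, whose evolution is a deterministic function of the operation count $t$ alone; (iii) inside \Call{Rebuild}{$m$}, the buffer concatenations, copies, and counter resets access fixed addresses, and the only data-dependent subroutine is \Call{KSelect}{}, which is perfectly oblivious by \cref{cor:ObliviousKSelection}. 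Composing these observations, the probe sequence $\tuple{\addr{\Call{Op}[i_k]{x_k}}}_{k=1}^{n}$ is a deterministic function of $n$ alone and hence identically distributed across any two input sequences $X$ and $Y$ of equal length, which is precisely \cref{def:PerfectSecurity}. Combining correctness, space, amortized runtime, and perfect security gives \cref{thm:ObliviousPriorityQueue}.
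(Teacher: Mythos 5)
Your proposal is correct and follows essentially the same route as the paper's proof: constant cost for the basic operations on $U_0$ and $D_0$, a geometric amortization of the linear-time $k$-selections in \Call{Rebuild}{} over the $\ell = \Th(\log N)$ levels, linear space from the geometrically sized buffers, and perfect security because the probe sequence is a deterministic function of $N$ and the operation count. Your explicit appeal to \cref{lma:Invariants} for correctness is a minor (and harmless) addition to what the paper states in the theorem's proof itself.
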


\begin{proof}
  Apart from the rebuilding, the runtime for \Call{Insert}{}, \Call{Min}{}, and \Call{DeleteMin}{} is constant.
  The amortized runtime per operation for \Call{Rebuild}{} is bounded by
  \begin{align*}
    & \sum_{m = 0}^{\ell - 1} \frac{T_{\Call{KSelect}{}}(2^{m + 1} + 2^m) + \sum_{i = 0}^{m - 1} T_{\Call{KSelect}{}}(2^{i + 2}) + c \mult 2^m}{2^m} \\
    \leq{} & \sum_{m = 0}^{\ell - 1} \frac{\Oh(2^m)}{2^m}
    \in \Oh(\ell)
    = \Oh(\log N)
    \,\text{.}
  \end{align*}
  The space bound follows immediately since all algorithmic building blocks have a linear runtime and the combined size of all up- and down-buffers is linear in $N$.

  By using the deterministic, perfectly-secure algorithm for $k$-selection (\cref{cor:ObliviousKSelection}), the obliviousness follows since the access pattern for each operation is a deterministic function of the capacity $N$ and the number of operations performed so far.
\end{proof}

Due to the lower bound for oblivious priority queues~\cite{Jacob.Larsen.ea19}, this runtime is optimal.
If the type of operation does not need to be hidden, \Call{Min}{} can be performed in constant time since rebuilding is only required for correctness when adding or removing an element.
By applying \Call{Rebuild}{$\ell - 1$} directly, the priority queue can be initialized from up to $N$ elements in $\Oh(N)$ time.

\subsection{Non-Distinct Priorities}%
\label{sec:NonDistinctPriorities}

For non-distinct priorities, we want to ensure that ties are broken such that the order of insertion is preserved, \ie that elements inserted earlier are extracted first.
For this, we augment each element with the time-stamp $t$ of the \Call{Insert}{} operation and order the elements lexicographically by priority and time-stamp.
This only increases the size of each element by a constant number of memory cells and thus does not affect the runtime complexity.

It remains to bound the size of the time-stamp for a super-polynomial number of operations:\footnote{%
  Shi~\cite[Section~III.E]{Shi20} also address this issue, but for our amortized construction we can use a simpler approach based on oblivious sorting.}
Here we note that when rebuilding the last level (\Call{Rebuild}{$\ell - 1$}), we can additionally sort all elements by time-stamp and compress the time-stamps to the range $\set{0, \ldots, N - 1}$ (preserving their order).
We then assign time-stamps starting with $t = N$ until the last level is rebuilt again.
This ensures that $2 N$ is an upper bound for the time-stamps, so $\Oh(\log N)$ bits suffice for each time-stamp.
With an (optimal) oblivious $\Oh(n \log n)$-time sorting algorithm~\cite{Ajtai.Komlós.ea83}, the additional amortized runtime for sorting is bounded by
\begin{equation}
  \frac{1}{2^{\ell - 1}} \mult T_{\Call{Sort}{}}(N)
  \in \Th(\log N)
\end{equation}
and does not affect the overall runtime complexity.


\section{Offline ORAM from Oblivious Priority Queues}%
\label{sec:OfflineOram}

As mentioned in the introduction, an offline ORAM is an oblivious array data structure given the sequence $I = \tuple{i_1, \ldots, i_n}$ of access locations in advance.
While an offline ORAM may pre-process $I$ to perform the operations more efficiently afterward, the data structure must still adhere to \cref{def:PerfectSecurity}, \ie the memory probes must be independent of the values in $I$.

An offline ORAM can be constructed from any oblivious priority queue using a technique similar to \emph{time-forward processing}~\cite{Chiang.Goodrich.ea95}.
We describe our construction for $N$ memory cells below:
In \cref{sec:OfflineOramOperations} we show how to realize \Call{Read}{$i_t$} and \Call{Write}{$i_t, v'_t$};
there we assume that the time $\tau_t$ at which the index $i_t$ is accessed next is known.
In \cref{sec:OfflineOramPreProcessing} we then show how to pre-process $I$ to derive these values $\tau_t$.

Jafargholi\ea~\cite{Jafargholi.Larsen.ea21} describe an alternative offline ORAM construction.
We simplify the construction by decoupling the information $\tau_t$ from the values written to the offline ORAM\@.

\subsection{Online Phase: Processing the Operations}%
\label{sec:OfflineOramOperations}

\begin{algorithm}[t]
  \caption{%
    Algorithm to perform an operation $\Call{Op}{} \in \set{\Call{Read}{}, \Call{Write}{}}$ in the offline ORAM at the access location $i$;
    $v'$ is the value to be written ($v' = \bot$ for $\Call{Op}{} = \Call{Read}{}$).
    The time-stamp $t$ is incremented after each access (with $t = 1$ initially).}%
  \label{alg:OfflineOramAccess}
  \begin{algorithmic}[1]
    \Procedure{Access}{$\Call{Op}{}, i, v'$}
      \State $\tuple{v, t_{\operatorname{next}}} \gets Q.\Call*{Min}{}$
      \State $Q.\Call*{DeleteMin}{}$ iff $t_{\operatorname{next}} = t$;
      perform a dummy operation iff $t_{\operatorname{next}} \not= t$
      \State $v \gets \begin{cases}
        v & \text{iff $\Call{Op}{} = \Call{Read}{} \land t_{\operatorname{next}} = t$,} \\
        v_{\operatorname{default}} & \text{iff $\Call{Op}{} = \Call{Read}{} \land t_{\operatorname{next}} \not= t$,} \\
        v' & \text{iff $\Call{Op}{} = \Call{Write}{}$}
      \end{cases}$
      \State $Q.\Call{Insert}{v, \Tau[t - 1]}$;%
        \label{alg:OfflineOramAccess:Insert}
      $t \gets t + 1$%
        \Comment{$\Tau[t - 1] = \tau_t$}
      \State\Return $v$
    \EndProcedure
  \end{algorithmic}
\end{algorithm}

For the offline ORAM with $N$ cells, initialize a priority queue $Q$ with capacity $N$;
the annotations $\Tau = \tuple{\tau_1, \ldots, \tau_n}$ as well as the current time $t$ are stored alongside the priority queue.
The procedure for processing the $t$-th operation \Call{Read}{$i_t$} or \Call{Write}{$i_t, v'_t$} is shown in \cref{alg:OfflineOramAccess}.
Some fixed value $v_{\operatorname{default}}$ is used as the initial value of all ORAM cells.

It is easy to verify that the resulting construction is correct given $\Tau$ and oblivious given a perfectly-secure priority queue $Q$.
For the array access $\Tau[t - 1]$ in \cref{alg:OfflineOramAccess:Insert}, note that $t$ is the number of the current operation, so the access can be performed \enquote{in the clear} without a linear scan;
this simplifies the construction w.\,r.\,t.\ Jafargholi\ea~\cite{Jafargholi.Larsen.ea21}.

\subsection{Offline Phase: Pre-Processing}%
\label{sec:OfflineOramPreProcessing}

To obtain the annotations $\Tau = \tuple{\tau_1, \ldots, \tau_n}$ we now describe how to pre-process the sequence $I = \tuple{i_1, \ldots, i_n} \in \set{0, \ldots, N - 1}^n$ of memory access locations.

The basic pre-processing proceeds as follows:
\begin{enumerate}
  \item Annotate each index $i_t$ with the time-stamp $t$.
  \item Obliviously sort the indices $I$ lexicographically by $i_t$ and $t$.
  \item Scan over indices in reverse, keeping track of the index $i$ and the time-stamp $t$, and annotate each index $i_t$ with the time-stamp $\tau_t$ it is accessed next (or some value larger than $n$ if there is no next access).
  \item Obliviously sort the indices $I$ by $t$ and discard everything but the annotations $\tau$.
\end{enumerate}
This can be done in amortized $\Oh(\log n)$ time per index with an $\Oh(n \log n)$-time oblivious sorting algorithm~\cite{Ajtai.Komlós.ea83} and results in the annotations $\Tau = \tuple{\tau_1, \ldots, \tau_n}$.

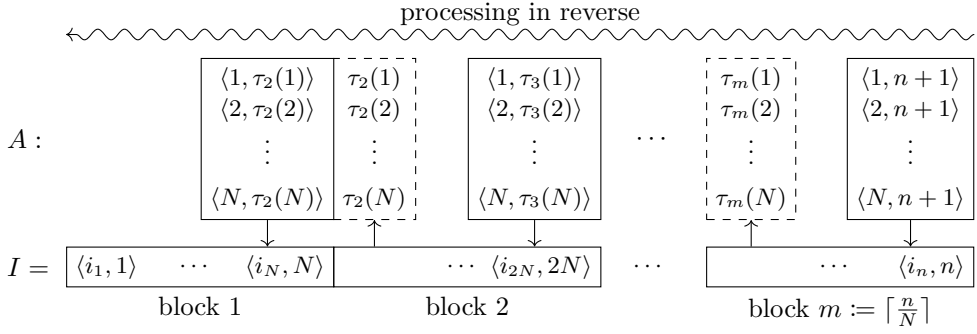
\begin{figure}[t]\centering

\begin{tikzpicture}[
  block/.style={
    rectangle,draw,
    align=center,
    font=\small,
  },
]

  \matrix (elem) [
      matrix of nodes,
      nodes={
        block,
        minimum height=1.5em,
        inner sep=0,
      },
      column sep=-\pgflinewidth,
    ] {
    |[minimum width=10em]| $\cdots$ &
    |[minimum width=10em]| $\cdots$ &
    |[minimum width=4em,draw=none]| $\cdots$ &
    |[minimum width=10em]| $\cdots$ \\
  };
  \node[anchor=west,font=\small] at (elem-1-1.west) {$\tuple{i_1, 1}$};
  \node[anchor=east,font=\small] at (elem-1-1.east) {$\tuple{i_N, N}$};
  \node[anchor=east,font=\small] at (elem-1-2.east) {$\tuple{i_{2 N}, 2 N}$};
  \node[anchor=east,font=\small] at (elem-1-4.east) {$\tuple{i_n, n}$};

  \node[anchor=east] (I) at (elem-1-1.west) {$I = {}$};
  \node[anchor=north] at (elem-1-1.south) {block $1$};
  \node[anchor=north] at (elem-1-2.south) {block $2$};
  \node[anchor=north] at (elem-1-4.south) {block $m \coloneqq \ceil{\frac{n}{N}}$};

  \node[block,anchor=south east] (Amin) at ([yshift=1em] elem-1-4.north east) {%
    $\tuple{0, n + 1}$ \\
    $\tuple{1, n + 1}$ \\
    $\tuple{2, n + 1}$ \\
    \raisebox{1ex}{\vdots}};
  \draw[->] (Amin.south) -- (elem-1-4.north -| Amin.south);

  \node[block,anchor=south west,dashed] (Amout) at ([yshift=1em] elem-1-4.north west) {%
    $\tau_m(0)$ \\
    $\tau_m(1)$ \\
    $\tau_m(2)$ \\
    \raisebox{1ex}{\vdots}};
  \draw[->] (elem-1-4.north -| Amout.south) -- (Amout.south);
  \node[block,anchor=south east] (A2in) at ([yshift=1em] elem-1-2.north east) {%
    $\tuple{0, \tau_3(0)}$ \\
    $\tuple{1, \tau_3(1)}$ \\
    $\tuple{2, \tau_3(2)}$ \\
    \raisebox{1ex}{\vdots}};
  \draw[->] (A2in.south) -- (elem-1-2.north -| A2in.south);
  \node at ($(Amout.west)!.5!(A2in.east)$) {$\cdots$};

  \node[block,anchor=south west,draw=none] (A2out) at ([yshift=1em] elem-1-2.north west) {%
    $\tau_2(0)$ \\
    $\tau_2(1)$ \\
    $\tau_2(2)$ \\
    \raisebox{1ex}{\vdots}};
  \draw[dashed] ([yshift=.5\pgflinewidth] A2out.south west) -- ([yshift=.5\pgflinewidth] A2out.south east) -- ([yshift=-.5\pgflinewidth] A2out.north east) -- ([yshift=-.5\pgflinewidth] A2out.north west);
  \draw[->] (elem-1-2.north -| A2out.south) -- (A2out.south);
  \node[block,anchor=south east] (A1in) at ([yshift=1em] elem-1-1.north east) {%
    $\tuple{0, \tau_2(0)}$ \\
    $\tuple{1, \tau_2(1)}$ \\
    $\tuple{2, \tau_2(2)}$ \\
    \raisebox{1ex}{\vdots}};
  \draw[->] (A1in.south) -- (elem-1-1.north -| A1in.south);

  \node[anchor=west] at (A1in.west -| I.west) {$A : {}$};
  \draw[->,decorate,decoration={snake,amplitude=.2em}] ([yshift=8em] elem-1-4.north east) -- ([yshift=8em] elem-1-1.north west) node[midway,above] {processing in reverse};
\end{tikzpicture}
  \caption{%
    Pre-processing the sequence of memory access locations $I$ when $n$ is super-polynomial in $N$.
    In this figure, $\tuple{i, t}$ denotes a tuple of index $i$ and time-stamp $t$ while $\tau_j(i)$ denotes the time-stamp at which the index $i$ is accessed next in block $j$.}%
  \label{fig:BlockedOfflineOramPreprocessing}
\end{figure}

However, when the number of operations $n$ is super-polynomial in the capacity $N$, \ie when $n \in \om(N^c)$ for all constants $c$, the time per index exceeds the optimal runtime of $\Oh(\log N)$.
In this case, the pre-processing needs to be performed more carefully as shown in \cref{fig:BlockedOfflineOramPreprocessing} to maintain the amortized runtime of $\Oh(\log N)$:
We divide the sequence $I$ into blocks of size $N$.
Additionally, we maintain an auxiliary block $A$ with --- for each index $i \in \set{1, \ldots, N}$ --- the time-stamp $\tau$ at which index $i$ is accessed next.
Initially (for the last block), we initialize the time-stamp $\tau$ for each index $i$ to some value greater than $n$.

The time-stamps $\tau_t$ are then determined block by block, from the last to the first.
When processing each block, we update the time-stamps $\tau$ in $A$ for the block processed next.
For this, we can process the $\Oh(N)$ elements of each block (and $A$) as described above.

Since we process $\Oh(N)$ elements for each of the $\Oh(\ceil{\frac{n}{N}})$ blocks by sorting and scanning, the pre-processing has a runtime of $\Oh(n \log N)$ overall.
This maintains the desired runtime of $\Oh(\log N)$ per operation amortized.

With our priority queue construction from \cref{sec:ObliviousPriorityQueue}, we obtain the following:

\begin{theorem}[Optimal Offline ORAM]\label{thm:OptimalOfflineOram}
  There is a deterministic, perfectly-secure offline ORAM with capacity $N$ that has amortized $\Oh(\log N)$ overhead and uses $\Oh(N + n)$ space.
\end{theorem}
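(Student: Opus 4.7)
The plan is to verify that the construction in \cref{sec:OfflineOramOperations,sec:OfflineOramPreProcessing}, instantiated with the priority queue from \cref{thm:ObliviousPriorityQueue}, meets all four requirements of the theorem: correctness, perfect obliviousness, amortized $\Oh(\log N)$ overhead, and $\Oh(N + n)$ space.

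First I would establish a loop invariant for \cref{alg:OfflineOramAccess}: at the start of operation $t$, for every index $i \in \set{0, \ldots, N - 1}$ previously accessed, the priority queue $Q$ holds exactly one element $\tuple{v, \tau}$ whose value $v$ is the current content of cell $i$ and whose priority $\tau \geq t$ is the next time index $i$ is accessed. If $i_t$ has been touched before, then $\tau = t$ for its entry and this entry is necessarily the minimum in $Q$, so \Call{Min}{} retrieves the correct value and \Call{DeleteMin}{} discards the stale entry; \cref{alg:OfflineOramAccess:Insert} then re-establishes the invariant with the (possibly updated) value and next-access time $\Tau[t - 1] = \tau_t$. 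If $i_t$ is being accessed for the first time, $t_{\operatorname{next}} \not= t$ and the algorithm substitutes $v_{\operatorname{default}}$ and performs a dummy in place of \Call{DeleteMin}{} so as not to evict an unrelated minimum. Correctness is thereby reduced to the claim that the pre-processing of \cref{sec:OfflineOramPreProcessing} correctly computes $\Tau$, which follows directly from the sort-by-index-then-scan-in-reverse pattern.

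For perfect obliviousness, I would argue that \cref{alg:OfflineOramAccess} performs on $Q$ a fixed schedule of priority queue operations --- one \Call{Min}{}, one \Call{DeleteMin}{}-or-dummy with identical probes, one \Call{Insert}{} --- plus a direct access to $\Tau[t - 1]$. Since $Q$ is perfectly secure by \cref{thm:ObliviousPriorityQueue} and the index $t - 1$ depends only on the operation count, the online probe distribution is independent both of $I$ and of whether each operation is a \Call{Read}{} or \Call{Write}{}, satisfying \cref{def:PerfectSecurity}. The pre-processing is likewise perfectly oblivious because it is composed solely of oblivious sorting~\cite{Ajtai.Komlós.ea83} and deterministic linear scans whose access patterns depend only on $n$ and $N$.

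The runtime and space analyses are then routine. Each \Call{Access}{} uses $\Oh(1)$ priority queue operations at amortized $\Oh(\log N)$ each by \cref{thm:ObliviousPriorityQueue}, giving amortized $\Oh(\log N)$ per online operation. For pre-processing, when $n = \Oh(N^c)$ for a constant $c$ a single sort-and-scan costs $\Oh(n \log n) = \Oh(n \log N)$; in the super-polynomial regime $n = N^{\om(1)}$ the same total bound is obtained by processing $\Oh(\ceil{n / N})$ blocks of size $\Oh(N)$ at $\Oh(N \log N)$ each. Dividing by $n$ yields amortized $\Oh(\log N)$ per operation in either case. The space is $\Oh(N)$ for $Q$, $\Oh(n)$ for $I$ and $\Tau$, and $\Oh(N)$ auxiliary storage during pre-processing. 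The only subtlety --- and the single point where care is required --- is the super-polynomial regime, where a naive $\Oh(n \log n)$ sort would exceed the amortized budget; the block-wise scheme of \cref{fig:BlockedOfflineOramPreprocessing} is precisely what restores the $\Oh(\log N)$ per operation bound.
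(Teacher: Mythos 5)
Your proposal is correct and follows essentially the same route as the paper, which establishes \cref{thm:OptimalOfflineOram} by combining the time-forward-processing construction of \cref{alg:OfflineOramAccess} with the priority queue of \cref{thm:ObliviousPriorityQueue} and the block-wise pre-processing of \cref{fig:BlockedOfflineOramPreprocessing}. You correctly identify the one genuine subtlety --- the super-polynomial regime $n = N^{\om(1)}$, where a single global sort would cost $\Oh(\log n) \not\subseteq \Oh(\log N)$ per operation --- and your loop invariant merely makes explicit what the paper leaves as ``easy to verify.''
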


Note that in contrast to the optimal statistically-secure constructions~\cite{Jafargholi.Larsen.ea21,Shi20}, our offline ORAM maintains security and correctness for operation sequences of arbitrary length, \eg when $n$ is super-polynomial in $N$.


\section{External-Memory Oblivious Priority Queue}%
\label{sec:ExternalObliviousPriorityQueue}

In many applications of oblivious algorithms and data structures, \eg for outsourced storage and for trusted computing in the presence of cache hierarchies, access to the main memory incurs high latencies.
In these applications, the complexity of an algorithm is more appropriately captured by the number of cache misses.
This motivates the study of oblivious algorithms in the \emph{external-memory}~\cite{Aggarwal.Vitter88} and \emph{cache-oblivious}~\cite{Frigo.Leiserson.ea99} models;
in this section we refer to these as \emph{cache-aware} and \emph{cache-agnostic} algorithms.

In this section, we instantiate I/O-efficient variants of our priority queue construction with perfect security and a private memory of constant size.
For this, we sketch how to obtain I/O-efficient partitioning algorithms with perfect security in \cref{sec:ExternalObliviousPartitioning}.
We then analyze the I/O-efficiency of our priority queue construction in \cref{sec:ExternalObliviousPriorityQueueAnalysis}.

\subparagraph{External-Memory Oblivious Algorithms}

In cache-aware and cache-agnostic models, the CPU operates on the data stored in an \emph{internal memory} (\emph{cache}) of $M$ memory words.
Blocks (\emph{cache-lines}) of $B$ memory words can be transferred between the internal and a large external memory (\emph{I/O operations}).
The number of these I/O operations, depending on the problem size $n$ as well as $M$ and $B$, is the primary performance metric for external algorithms~\cite{Aggarwal.Vitter88}.

Cache-aware algorithms depend on the parameters $M$ and $B$ and explicitly issue the I/O operations.
In contrast, cache-agnostic algorithms are unaware of the parameters $M$ and $B$;
here the internal memory is managed \enquote{automatically} through a \emph{replacement policy}~\cite{Frigo.Leiserson.ea99}.
We assume an optimal replacement policy and a \emph{tall cache}, \ie $M \geq B^{1 + \varepsilon}$ for a constant $\varepsilon > 0$;
both are standard assumptions~\cite{Frigo.Leiserson.ea99,Chan.Guo.ea18}.

For \emph{oblivious external-memory} algorithms, we apply the \cref{def:PerfectSecurity} to cache-aware and cache-agnostic algorithms.
In both cases we assume that the internal memory is conceptually distinct from the constant-size private memory.
That is, we guarantee that memory words both in the internal memory and within a block are accessed in an oblivious manner (\emph{strong obliviousness}~\cite{Chan.Guo.ea18}).
For this reason, our security definition remains unchanged.
Note that this implies that the block access pattern is also oblivious, \ie independent of the operations/inputs as in \cref{def:PerfectSecurity}.

\subparagraph{Previous Work}

\begin{table}[t]\centering%
  \begin{threeparttable}
    \begin{tabular*}{\textwidth}{@{\extracolsep\fill} c c c c c}
      \toprule
      & \multicolumn{2}{c}{\thead{cache-aware}} & \multicolumn{2}{c}{\thead{cache-agnostic}} \\
      \midrule
      \thead{statistical security} & $\Oh(\ceil*{\frac{n}{B}})$\tnote*{t, w} & \cite{Goodrich11} & $\Oh(\ceil*{\frac{n}{B}})$\tnote*{t, w} & \cite{Lin.Shi.ea19} \\
      \addlinespace[1ex]%
      \cmidrule{2-5}%
      \addlinespace[1ex]%
      \multirow{2}{*}{\thead{perfect security}} & $\Oh(\ceil*{\frac{n}{B}} \log_M n)$ & \cite{Lin.Shi.ea19} & $\Oh(\ceil*{\frac{n}{B}} \log_M n)$ & \cite{Lin.Shi.ea19} \\
      & $\Oh(\ceil*{\frac{n}{B}})$ & \textbf{new, via~\cite{Asharov.Komargodski.ea22}} & $\Oh(\ceil*{\frac{n}{B}} \log\log_M n)$\tnote*{t} & \textbf{new, via~\cite{Asharov.Komargodski.ea22}} \\
      \bottomrule
    \end{tabular*}
    \begin{tablenotes}[para]
      \item[t] assuming a \emph{tall cache} ($M \geq B^{1 + \varepsilon}$)
      \item[w] assuming a \emph{wide cache-line} ($B \geq \log^\varepsilon n$)
    \end{tablenotes}
    \caption{%
      Best known I/O upper bounds for obliviously partitioning $n$ elements.}%
  \label{tbl:ExternalObliviousPartitioningBounds}
  \end{threeparttable}
\end{table}

While there is a line of research explicitly considering cache-aware~\cite{Goodrich11,Goodrich.Simons14} and cache-agnostic~\cite{Chan.Guo.ea18,Lin.Shi.ea19} oblivious algorithms, most works on oblivious algorithms consider internal algorithms with runtime and bandwidth overhead as performance metrics.
To the best of our knowledge, cache-agnostic oblivious priority queues have not been explicitly considered in the literature.
Implicitly, I/O-efficiency is sometimes~\cite{Jacob.Larsen.ea19,Jafargholi.Larsen.ea21} treated through parameters:
An oblivious algorithm with $(B \mult w)$-width memory words and $M \mult w$ bits of private memory can equivalently be stated as an oblivious external-memory algorithm with $M$ words of internal memory and blocks of size $B$.
We note that this re-parameterization does not allow distinguishing internal and private memory and that the resulting algorithms are inherently cache-aware.

This equivalence allows us to restate upper and lower bounds in terms of external-memory algorithms:
Jacob\ea~\cite{Jacob.Larsen.ea19} show that $\Om(\frac{1}{B} \log \frac{N}{M})$ I/O operations amortized are necessary for a cache-aware oblivious priority queue;
this also applies to cache-agnostic oblivious priority queues.\footnote{%
  In contrast, $\Th(\frac{1}{B} \log_{\frac{M}{B}} \frac{N}{B})$ I/O operations amortized are sufficient for non-oblivious priority queues~\cite{Arge.Bender.ea02}.
  Here, the base of the logarithm is not constant but depends on the parameters $M$, $B$.}
This bound is matched by Jafargholi\ea~\cite{Jafargholi.Larsen.ea21}, but the construction is cache-aware, requires $\Om(\log N)$ words of private memory, and is randomized with statistical security.

The optimal internal partitioning algorithm~\cite{Asharov.Komargodski.ea22} has --- due to the use of expander graphs --- an oblivious, but highly irregular access pattern and is thus not I/O-efficient.
There are external oblivious partitioning algorithms~\cite{Goodrich11,Lin.Shi.ea19}, but they are either only statistically secure or inefficient.
We provide an overview of existing partitioning algorithms in \cref{tbl:ExternalObliviousPartitioningBounds}.

\subsection{External-Memory Oblivious Partitioning}%
\label{sec:ExternalObliviousPartitioning}

For I/O-efficient instantiations of our priority queue, we need I/O-efficient partitioning algorithms.
For this reason, we show how to construct an optimal cache-aware and a near-optimal cache-agnostic oblivious partitioning algorithm, respectively, with perfect security.
Remember that for partitioning with a predicate $P$, we need to permute the elements such that all elements $x$ with $P(x) = 0$ precede those with $P(x) = 1$.

\begin{algorithm}[t]
  \caption{%
    Cache-aware oblivious partitioning algorithm.
    For simplicity, we assume $m \geq 2$.}%
  \label{alg:CacheAwarePartition}
  \begin{algorithmic}[1]
    \Procedure{CacheAwarePartition$_P$}{$A$}%
      \State conceptually partition $A$ into $m \coloneqq \ceil{\frac{\abs{A}}{B}}$ blocks $G_i$ of $B$ consecutive elements each (where the last block may have fewer elements)
      \For{$i \gets 0, \ldots, m - 1$}
        \Call{Partition}[$P$]{$G_i$}
      \EndFor
      \For{$i \gets 1, \ldots, m - 2$}
        \Call{PurifyHalf}[$P$]{$G_{i - 1}, G_i$}%
          \label{alg:CacheAwarePartition:Consolidation}%
          \Comment{consolidate the blocks}
      \EndFor
      \State \Call{Partition}[$P'\colon X \mapsto P(X[0])$]{$\tuple{G_0, \ldots, G_{m - 3}}$}%
        \label{alg:CacheAwarePartition:BlockPartition}%
        \Comment{apply \Call{Partition}{} to the blocks}
      \State \Call{Reverse}{$G_{m - 1}$};
      \Call{PartBitonic}[$P$]{$G_{m - 2} \concat G_{m - 1}$}%
        \label{alg:CacheAwarePartition:MergeLast}
      \State \Call{Reverse}{$G_{m - 2} \concat G_{m - 1}$};
      \Call{PartBitonic}[$P$]{$G_0 \concat \cdots \concat G_{m - 1}$}
    \EndProcedure
  \end{algorithmic}
\end{algorithm}

We mainly rely on the optimal (internal) oblivious partitioning algorithm~\cite[Theorem~5.1]{Asharov.Komargodski.ea22} (\Call{Partition}{}) and standard external-memory techniques.
We also use oblivious building blocks from the previous work by Lin\ea~\cite[full version, Appendix~C.1.2]{Lin.Shi.ea19}:
\begin{description}
  \item[{\Call{PurifyHalf}[$P$]{$A, B$}}] This procedure is given two partitioned blocks $A$ and $B$ with $\abs{A} = \abs{B}$ and permutes the elements such that $A$ is \emph{pure}, \ie either only consists of elements $x$ with $P(x) = 0$ or only consists of elements with $P(x) = 1$, and $B$ is again partitioned.\footnote{%
    We note that, as confirmed by the authors (personal communication), Algorithm~5 given by Lin\ea\ for \Call{PurifyHalf}{} contains a technical error;
    the resulting block is not necessarily pure.
    This, however, can be fixed easily, see \cref{sec:CorrectPurifyHalf}.}

  \item[{\Call{PartBitonic}[$P$]{$A$}}] This procedure is given a \emph{bitonically partitioned}~\cite{Lin.Shi.ea19} array $A$, \ie an array where all elements $x$ with $P(x) = 1$ or all elements with $P(x) = 0$ are consecutive, and partitions $A$.
\end{description}
Both building blocks are deterministic with perfect security, cache-agnostic, and have a linear runtime of $\Oh(\ceil{\frac{n}{B}})$~\cite{Lin.Shi.ea19}.

Our cache-aware partitioning algorithm is shown in \cref{alg:CacheAwarePartition}.
The idea is to split $A$ into blocks of size $B$, partition each block, and then apply the internal partitioning algorithm to the blocks.

\begin{corollary}[Optimal Cache-Aware Oblivious Partitioning via~\cite{Asharov.Komargodski.ea22,Lin.Shi.ea19}]\label{cor:CacheAwareParition}
  There is a cache-aware, deterministic, perfectly-secure oblivious partitioning algorithm that requires $\Oh(\ceil{\frac{n}{B}})$ I/O operations for $n$ elements.
\end{corollary}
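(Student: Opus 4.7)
The plan is to verify that \cref{alg:CacheAwarePartition} has three properties separately: correctness (the output is partitioned by $P$), perfect obliviousness, and the I/O bound $\Oh(\ceil{n/B})$. Correctness reduces to tracking how the block structure evolves through each phase; obliviousness reduces to composing the perfectly-oblivious subroutines; the I/O bound reduces to showing that every step manipulates only $\Oh(1)$ blocks per unit of work.

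For correctness, I would trace the contents of each block. After the initial per-block \Call{Partition}{} calls, every $G_i$ has the form $0^*1^*$ (writing $0$, $1$ for elements with $P(x)=0$, $P(x)=1$ respectively). Each invocation of \Call{PurifyHalf}{} leaves its first argument pure and its second argument in the form $0^*1^*$, so after the sweep on \cref{alg:CacheAwarePartition:Consolidation} the blocks $G_0,\ldots,G_{m-3}$ are all pure, $G_{m-2}$ is partitioned, and the untouched $G_{m-1}$ is still partitioned. The block-level call on \cref{alg:CacheAwarePartition:BlockPartition} reorders $G_0,\ldots,G_{m-3}$ so that pure-$0$ blocks precede pure-$1$ blocks; the predicate $P'\colon X \mapsto P(X[0])$ is well-defined on these blocks precisely because they are pure. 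On \cref{alg:CacheAwarePartition:MergeLast}, reversing $G_{m-1}$ turns it into the form $1^*0^*$, so $G_{m-2} \concat G_{m-1}$ becomes bitonically partitioned and \Call{PartBitonic}{} produces a $0^*1^*$ suffix. At this point the entire array has the form $0\cdots 0\,1\cdots 1\,0\cdots 0\,1\cdots 1$; reversing the suffix $G_{m-2} \concat G_{m-1}$ once more yields the bitonically partitioned form $0\cdots 0\,1\cdots 1\,1\cdots 1\,0\cdots 0$, on which the final \Call{PartBitonic}{} call produces a fully partitioned array.

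For the I/O bound, the initial per-block \Call{Partition}{} calls and the \Call{PurifyHalf}{} sweep each handle $\Oh(1)$ blocks at a time (possible under the tall-cache assumption $M\geq 2B$), contributing $\Oh(m)=\Oh(\ceil{n/B})$ I/Os in total. The two \Call{Reverse}{} and \Call{PartBitonic}{} calls are linear-time and cache-agnostic~\cite{Lin.Shi.ea19}, also contributing $\Oh(\ceil{n/B})$ I/Os. The crux of the analysis, which I expect to be the main obstacle, is bounding the block-level \Call{Partition}{} on \cref{alg:CacheAwarePartition:BlockPartition}: although the algorithm of Asharov\ea\ has a highly irregular access pattern and is not I/O-efficient when applied to words, running it on $m-2$ opaque block-elements of $B$ words each turns every one of its $\Oh(m)$ element accesses into exactly one block I/O. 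Hence this step costs $\Oh(\ceil{n/B})$ I/Os instead of $\Oh(n)$, and this blocking trick is precisely what yields the optimal bound.

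Perfect obliviousness follows by composition: every subroutine is perfectly oblivious with an access pattern determined by its input length alone, and the control flow of \cref{alg:CacheAwarePartition} depends only on $n$, $M$, and $B$. Hence the overall memory probe sequence is a deterministic function of these parameters, which satisfies \cref{def:PerfectSecurity}.
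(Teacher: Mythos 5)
Your proposal is correct and follows essentially the same route as the paper's proof: correctness by tracing the block states through the consolidation, block-level \Call{Partition}{}, and \Call{PartBitonic}{} phases, the I/O bound via the observation that the internal partitioning algorithm treats the $B$-word blocks as indivisible elements so each of its linearly many accesses costs $\Oh(1)$ I/Os, and obliviousness by composing deterministic subroutines whose access patterns depend only on the input length. Your correctness trace is merely more explicit than the paper's, which states the same facts more tersely.
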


\begin{proof}
  For the correctness, note that after \cref{alg:CacheAwarePartition:Consolidation} all blocks except $G_{m - 2}$ and $G_{m - 1}$ are pure.
  By applying the internal partitioning algorithm to the blocks in \cref{alg:CacheAwarePartition:BlockPartition}, all $0$-blocks are swapped to the front.
  The partitioning is completed by first merging the partitions $G_{m - 2}$ and $G_{m - 1}$ in \cref{alg:CacheAwarePartition:MergeLast} and then merging both with the rest of the blocks in $A$.

  The partitioning of each individual block is performed in internal memory and thus requires $\Oh(\ceil{\frac{n}{B}})$ I/O operations overall.
  The consolidation and merging of the partitions can also be performed with $\Oh(\ceil{\frac{n}{B}})$ I/O operations~\cite{Lin.Shi.ea19}.
  For the partitioning in \cref{alg:CacheAwarePartition:BlockPartition}, the I/O-efficiency follows from the construction of the internal partitioning algorithm~\cite[Theorem~5.1]{Asharov.Komargodski.ea22}:
  The algorithm operates in a \enquote{balls-in-bins}-manner, \ie the elements are treated as indivisible.
  The algorithm performs a linear number of operations on $\floor{\frac{n}{B}}$ elements, where each element has size $\Oh(B)$.
  This leads to an I/O complexity of $\Oh(\ceil{\frac{n}{B}})$ overall.

  The obliviousness follows since the access pattern for each operation is a deterministic function of the input size $n \coloneqq \abs{A}$.
\end{proof}

\begin{algorithm}[t]
  \caption{%
    Cache-agnostic oblivious partitioning algorithm for $M \geq B^{1 + \varepsilon}$.
    We assume $m \geq 2$.}%
  \label{alg:CacheAgnosticPartition}
  \begin{algorithmic}[1]
    \Procedure{CacheAgnosticPartition$_P$}{$A$}
      \If{$n \leq 4$}
        \State compact $A$ via oblivious sorting
      \Else
        \State conceptually partition $A$ into $m \coloneqq \ceil{\frac{\abs{A}}{k}}$ groups $G_i$ of $k \coloneqq \ceil{\sqrt[1 + \varepsilon]{\abs{A}}}$ consecutive elements (where the last group may have fewer elements)
        \For{$i \gets 0, \ldots, m - 1$}
          \Call{CacheAgnosticPartition}[$P$]{$G_i$}
        \EndFor
        \For{$i \gets 1, \ldots, m - 2$}
          \Call{PurifyHalf}[$P$]{$G_{i - 1}, G_i$}%
            \label{alg:CacheAgnosticPartition:Consolidation}%
            \Comment{consolidate the groups}
        \EndFor
        \State \Call{Partition}[$P'\colon X \mapsto P(X[0])$]{$\tuple{G_0, \ldots, G_{m - 3}}$}%
          \Comment{apply \Call{Partition}{} to the groups}%
          \label{alg:CacheAgnosticPartition:GroupPartition}
        \State \Call{Reverse}{$G_{m - 1}$};
        \Call{PartBitonic}[$P$]{$G_{m - 2} \concat G_{m - 1}$}
        \State \Call{Reverse}{$G_{m - 2} \concat G_{m - 1}$};
        \Call{PartBitonic}[$P$]{$G_0 \concat \cdots \concat G_{m - 1}$}
      \EndIf
    \EndProcedure
  \end{algorithmic}
\end{algorithm}

For the cache-agnostic partitioning algorithm, the elements can be processed similarly.
Here the parameter $B$ is unknown, so the idea is to recursively divide into smaller groups until a group has size $\leq B$.
The resulting algorithm is shown in \cref{alg:CacheAgnosticPartition}.

\begin{corollary}[Cache-Agnostic Oblivious Partitioning via~\cite{Asharov.Komargodski.ea22,Lin.Shi.ea19}]\label{cor:CacheAgnosticPartition}
  Assuming a tall cache of size $M \geq B^{1 + \varepsilon}$ for a constant $\varepsilon > 0$, there is a cache-agnostic, deterministic, perfectly-secure oblivious partitioning algorithm that requires $\Oh(\ceil*{\frac{n}{B}} \log\log_M n)$ I/O operations for $n$ elements.
\end{corollary}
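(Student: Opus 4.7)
The plan is to handle correctness and obliviousness by transferring the argument from \cref{cor:CacheAwareParition} almost verbatim, and then to set up a recurrence for the I/O complexity that bottoms out as soon as a subproblem fits in the internal memory.

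For correctness, I would observe that the structural invariants mirror those of the cache-aware algorithm: after the recursive calls each group $G_i$ is individually partitioned; the sequential \Call{PurifyHalf}{} sweep on \cref{alg:CacheAgnosticPartition:Consolidation} leaves $G_0, \ldots, G_{m-3}$ pure and $G_{m-2}$ partitioned; the internal \Call{Partition}{} call (keyed by the first element of each group) moves all $0$-groups ahead of all $1$-groups among $G_0, \ldots, G_{m-3}$; and the two closing \Call{PartBitonic}{} invocations stitch in the remaining two possibly mixed groups. Obliviousness follows, as in the cache-aware case, because every branching decision (including the base-case test $n \leq 4$) depends only on $n = \abs{A}$.

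For the I/O bound, let $T(n)$ denote the cost on an input of length $n$. The plan is to identify $n \leq M$ as the right base regime: once the input fits in the internal memory, a single load costing $\Oh(\ceil{n/B})$ brings the entire subproblem into cache, and the rest of the recursion completes without any further misses. For $n > M$ the tall-cache assumption is what I would lean on twice. First, $M \geq B^{1+\varepsilon}$ together with $n > M$ gives $k = \ceil{n^{1/(1+\varepsilon)}} > B$, so every group occupies $\Oh(\ceil{k/B})$ full blocks and the sequential subroutines (\Call{PurifyHalf}{} sweep, the two \Call{PartBitonic}{} calls, and the \Call{Reverse}{} calls) together cost $\Oh(\ceil{n/B})$ by the cache-agnostic analysis of~\cite{Lin.Shi.ea19}.

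The main obstacle is bounding the I/O cost of the internal \Call{Partition}{} call on \cref{alg:CacheAgnosticPartition:GroupPartition}, since the Asharov~et~al.\ algorithm is not itself I/O-efficient. Here I would invoke the same balls-in-bins/indivisibility property already used in the proof of \cref{cor:CacheAwareParition}: \Call{Partition}{} performs a linear number of item-level operations on the $m$ groups, and because $k > B$ each item of size $\Oh(k)$ can be fetched or relocated in $\Oh(\ceil{k/B})$ I/Os, for a total of $\Oh(m \mult \ceil{k/B}) = \Oh(\ceil{n/B})$. Combining everything yields
\begin{equation*}
  T(n) \leq m \mult T(k) + \Oh\!\paren*{\ceil*{\tfrac{n}{B}}} \quad \text{for } n > M, \qquad T(n) = \Oh\!\paren*{\ceil*{\tfrac{n}{B}}} \quad \text{for } n \leq M.
\end{equation*}
Normalizing by $\ceil{n/B}$ turns this into $f(n) \leq f(k) + \Oh(1)$, and the recursion depth until $n$ shrinks below $M$ is $\Oh(\log_{1+\varepsilon}(\log n / \log M)) = \Oh(\log\log_M n)$, giving the claimed bound $T(n) = \Oh(\ceil{n/B} \log\log_M n)$.
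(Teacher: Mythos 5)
Your proposal is correct and follows essentially the same route as the paper's proof: the same transfer of correctness and obliviousness from the cache-aware case, the same use of the tall-cache assumption (you argue $n > M \Rightarrow k > B$, the paper argues the contrapositive $k \le B \Rightarrow n \le M$), the same balls-in-bins bound of $\Oh(\ceil{n/B})$ for the internal \textproc{Partition} call on groups, and the same $\Th(\log\log_M n)$ recursion depth. Phrasing the analysis as an explicit recurrence normalized by $\ceil{n/B}$ is a presentational difference only.
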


\begin{proof}
  The correctness of the base case is obvious.
  For the recursive case, the algorithm proceeds as the cache-aware \cref{alg:CacheAwarePartition} above, so the correctness can be seen as in \cref{cor:CacheAwareParition}.

  For the I/O complexity of \cref{alg:CacheAgnosticPartition:GroupPartition}, we distinguish two cases:
  \begin{description}
    \item[$k \leq B$] In this case $B \geq k \geq \sqrt[1 + \varepsilon]{n}$, so $n \leq B^{1 + \varepsilon} \leq M$ with the tall cache assumption.
    This means that the problem instance fits in the internal memory, and the step thus has an I/O complexity of $\Oh(\ceil*{\frac{n}{B}})$.
    \item[$k > B$] Here we can rely on the same insight as for the cache-aware partitioning, \ie that the internal algorithm performs $\Oh(\frac{n}{k})$ operations on elements of size $k \geq B$.
    This leads to an I/O complexity of $\Oh(\frac{n}{k} \mult \ceil*{\frac{k}{B}}) = \Oh(\frac{n}{B})$.
  \end{description}
  The other steps have an I/O complexity of $\Oh(\ceil*{\frac{n}{B}})$ as in the cache-aware algorithm.

  On depth $i$ of the recursion tree, the instance size is
  \begin{equation*}
    n_i \coloneqq \sqrt[{(1 + \varepsilon)}^i]{n}
    \quad\quad
    \text{so that on depth}
    \quad\quad
    i \geq \log_{1 + \varepsilon} \frac{\log n}{\log M}
    \in \Th(\log\log_M n)
  \end{equation*}
  the instances fit in the internal memory and no further I/O operations are required for the recursion.
  This leads to an I/O complexity of $\Oh(\ceil*{\frac{n}{B}} \log\log_M n)$ overall.

  As in \cref{cor:CacheAwareParition}, the obliviousness follows since the access pattern for each operation is a deterministic function of the input size $n \coloneqq \abs{A}$.
\end{proof}

\subsection{Analysis of the External-Memory Oblivious Priority Queue}%
\label{sec:ExternalObliviousPriorityQueueAnalysis}

As for the internal algorithm introduced in \cref{sec:ObliviousBuildingBlocks}, we can obtain efficient external oblivious algorithms for $k$-selection via partitioning~\cite[full version, Appendix~E.2]{Lin.Shi.ea19}.
We thus obtain cache-aware and cache-agnostic algorithms for $k$-selection with the same asymptotic complexities as the partitioning algorithms described above.

With these external algorithms, we can analyze the construction described in \cref{sec:ObliviousPriorityQueue} in the cache-aware and cache-agnostic settings:

\begin{theorem}[External-Memory Oblivious Priority Queues]\label{thm:ExternalObliviousPriorityQueue}
  There are deterministic, perfectly-secure oblivious priority queues with capacity $N$ that support each operation with I/O complexity $\Oh(\frac{1}{B} \log \frac{N}{M})$ amortized (cache-aware) or $\Oh(\frac{1}{B} \log \frac{N}{M} \log\log_M N)$ amortized (cache-agnostic), respectively.
\end{theorem}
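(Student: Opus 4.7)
The plan is to reuse the structural analysis from \cref{thm:ObliviousPriorityQueue} while replacing the internal $k$-selection with an I/O-efficient counterpart. Plugging the external partitioning algorithms of \cref{cor:CacheAwareParition,cor:CacheAgnosticPartition} into the same selection-from-partitioning reduction used for \cref{cor:ObliviousKSelection} yields deterministic, perfectly-secure external $k$-selection algorithms with I/O complexities $\Oh(\ceil{n/B})$ and $\Oh(\ceil{n/B}\log\log_M n)$ on $n$ input elements, respectively. The priority queue of \cref{sec:ObliviousPriorityQueue} is otherwise unchanged, so correctness and obliviousness carry over immediately from \cref{thm:ObliviousPriorityQueue}.

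The heart of the proof is a refined amortized analysis of \Call{Rebuild}{$m$} in the external-memory model. Let $m^*$ be the largest index with $\sum_{i \leq m^*} \paren*{\abs{U_i} + \abs{D_i}} \leq M/2$; by the geometric growth of the buffers, $2^{m^*} \in \Th(M)$. Under the assumed optimal replacement policy the buffers of levels $0, \ldots, m^*$ are kept resident in internal memory, so every non-rebuild operation (which touches only $D_0$ and $U_0$) and every \Call{Rebuild}{$m$} with $m \leq m^*$ incur zero I/Os amortized; likewise, within a \Call{Rebuild}{$m$} with $m > m^*$, the inner $k$-selections restricted to levels $i \leq m^*$ run entirely in cache. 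The remaining work in \Call{Rebuild}{$m$} for $m > m^*$ comprises the outer $k$-selection on $\Oh(2^m)$ elements and the inner $k$-selections on levels $m^* < i \leq m$, which together cost $\Oh(2^m/B)$ I/Os cache-aware or $\Oh((2^m/B)\log\log_M 2^m)$ I/Os cache-agnostic (a geometric sum dominated by the top level). The tall-cache assumption $M \geq B^{1 + \varepsilon}$ leaves ample workspace alongside the pinned small levels, and any eviction a large rebuild causes is paid for by the very same $\Oh(2^m/B)$ transfer bound.

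Assembling the amortized I/O cost per operation as in the proof of \cref{thm:ObliviousPriorityQueue}, but with the sum restricted to the non-cached levels, yields
\begin{equation*}
  \sum_{m = m^* + 1}^{\ell - 1} \frac{\Oh(2^m/B)}{2^m} \in \Oh\paren*{\frac{1}{B}\log\frac{N}{M}}
\end{equation*}
in the cache-aware setting, while the analogous sum with an additional factor $\log\log_M 2^m \leq \log\log_M N$ in each numerator telescopes to $\Oh\paren*{\frac{1}{B}\log\frac{N}{M}\log\log_M N}$ in the cache-agnostic setting. I expect the main obstacle to be the clean justification of the zero-cost claim for small-level rebuilds and non-rebuild operations: this hinges on combining the tall-cache assumption with the observation that cache evictions triggered by large rebuilds are already absorbed in those rebuilds' $\Oh(2^m/B)$ transfer bounds, so a simple pinning strategy (realizable under the optimal replacement policy in both settings) suffices without inflating the asymptotic I/O bound.
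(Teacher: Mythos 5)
Your proposal is correct and follows essentially the same route as the paper's proof: identify the threshold level below which all buffers fit in internal memory (your $m^*$ is the paper's $j = \log_2 M - \Oh(1)$), charge zero I/Os to non-rebuild operations and to small-level rebuilds under the optimal replacement policy, and sum the external $k$-selection costs of the large rebuilds amortized over their periods. The only cosmetic differences are that the paper first proves a statement parameterized by a generic $O_{\Call{KSelect}{}}(n) \in \Om(n/B)$ before instantiating it, and accounts for evicting/reloading the small levels via an explicit $\Oh(M/(B \mult 2^j))$ amortized term rather than absorbing it into the large rebuild's transfer bound as you do; both bookkeeping choices are valid.
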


\begin{proof}
  We prove the theorem via a slightly more general statement:
  Assuming the existence of a deterministic, perfectly-secure $k$-selection algorithm with I/O complexity $O_{\Call{KSelect}{}}(n) \in \Om(\frac{n}{B})$, there is a deterministic, perfectly-secure priority queue with capacity $N$ that supports each operation with I/O complexity $\Oh(\frac{O_{\Call{KSelect}{}}(3 N)}{N} \log \frac{N}{M})$ amortized.

  Since the external $k$-selection algorithms are functionally equivalent to the internal algorithm and oblivious, the correctness and obliviousness follows from \cref{thm:ObliviousPriorityQueue}.
  For the I/O complexity, note that the first $j \coloneqq \log_2 M - \Oh(1)$ levels of the priority queue fit into the $M$ cells of the internal memory.
  The operations \Call{Insert}{}, \Call{Min}{}, and \Call{DeleteMin}{} only operate on $D_0$ and $U_0$, so they do not require additional I/O operations.

  It remains to analyze the I/O complexity of rebuilding the data structure.
  For this, we only need to consider rebuilding the levels $m \geq j$ since rebuilding levels $m < j$ only operates on the internal memory.
  When rebuilding a level $m \geq j$, the levels $i < j$ can be stored to and afterward retrieved from the external memory with $\Oh(\frac{M}{B})$ I/O operations.
  Assuming optimal page replacement, the amortized number of I/O operations for rebuilding is bounded by
  \begin{align*}
    & \sum_{m = j}^{\ell - 1} \frac{O_{\Call{KSelect}{}}(2^{m + 1} + 2^m) + \sum_{i = 0}^{m - 1} O_{\Call{KSelect}{}}(2^{i + 2}) + c \mult \frac{2^m}{B}}{2^m}
    + \underbrace{\Oh\paren*{\frac{M}{B \mult 2^j}}}_{\text{levels $< j$}} \\
    \leq{} & \sum_{m = j}^{\ell - 1} \frac{\Oh(O_{\Call{KSelect}{}}(3 \mult 2^m))}{2^m}
    + \Oh\paren*{\frac{1}{B}}
    \in \Oh\paren*{(\ell - j) \mult \frac{O_{\Call{KSelect}{}}(3 N)}{N}} \\
    ={} & \Oh\paren*{\frac{O_{\Call{KSelect}{}}(3 N)}{N} \log \frac{N}{M}}
    \,\text{.}
  \end{align*}

  With the cache-aware $k$-selection via \cref{cor:CacheAwareParition} and the cache-agnostic $k$-selection via \cref{cor:CacheAgnosticPartition}, we obtain the claimed I/O complexities.
\end{proof}

With this, we obtain an optimal cache-aware oblivious priority queue and a near-optimal cache-agnostic oblivious priority queue, both deterministic and with perfect security.
Using a cache-agnostic, perfectly-secure oblivious sorting algorithm with (expected) I/O complexity $\Oh(\frac{n}{B} \log_{\frac{M}{B}} \frac{n}{B})$~\cite{Chan.Guo.ea18}\footnote{%
  Chan\ea~\cite{Chan.Guo.ea18} only describe a \emph{statistically-secure} sorting algorithm that works by randomly permuting and then sorting the elements.
  Since the failure can only occur when permuting, can be detected, and leaks nothing about the input, we can repeat the permuting step until it succeeds~\cite[Section~3.2.2]{Chan.Shi.ea21}.
  This leads to a perfectly-secure sorting algorithm with the same complexity in expectation.%
}, we can apply the same construction as in \cref{sec:OfflineOram} to obtain external offline ORAMs with the same (expected) I/O complexities as in \cref{thm:ExternalObliviousPriorityQueue}.
We exploit that the construction is a combination of sorting, linear scans, and time-forward processing.


\section{Conclusion and Future Work}%
\label{sec:Conclusion}

In this paper, we show how to construct an oblivious priority queue with perfect security and (amortized) logarithmic runtime.
While the construction is simple, it improves the state-of-the-art for perfectly-secure priority queues, achieving the optimal runtime.
The construction immediately implies an optimal offline ORAM with perfect security.
We extend our construction to the external-memory model, obtaining optimal cache-aware and near-optimal cache-agnostic I/O complexities.

\subparagraph{Future Work}

The optimal perfectly-secure partitioning algorithm~\cite{Asharov.Komargodski.ea22} has enormous constant runtime factors (in the order of $\gg 2^{111}$~\cite{Dittmer.Ostrovsky20}) due to the reliance on bipartite expander graphs.\footnote{%
  The algorithm of Dittmer and Ostrovsky~\cite{Dittmer.Ostrovsky20} has lower constant runtime factors, but is randomized and only achieves statistical security due to a (negligible) failure probability.}
Nevertheless, our construction can also be implemented efficiently in practice --- albeit at the cost of an $\Oh(\log N)$-factor in runtime --- by relying on merging~\cite{Batcher68} (instead of $k$-selection via linear-time partitioning).
We leave comparing such a practical variant to previous protocols as a future work.

On the theoretical side, a main open problem is to obtain a perfectly-secure oblivious priority queue supporting deletions (of arbitrary elements) in optimal $\Oh(\log N)$ time.
An additional open problem is the de-amortization of the runtime complexity.
Considering external oblivious algorithms, an open problem is to close the gap on cache-oblivious partitioning, \ie remove the remaining $\Oh(\log\log_M n)$ factor in I/O complexity for perfectly-secure algorithms.
We consider all of these interesting problems for future works.

\subparagraph{Acknowledgements}

We thank all anonymous reviewers for their constructive comments that helped to improve the presentation.

  \bibliography{literature}

  \newpage
  \appendix

\section{Oblivious \texorpdfstring{\boldmath$k$}{k}-Selection}%
\label{sec:ObliviousKSelection}

We now present the linear-time oblivious $k$-selection algorithm and formally prove its correctness, linear runtime, and obliviousness.
The algorithm is based on Blum\ea's classical linear-time selection algorithm~\cite{Blum.Floyd.ea73}.
We use the approach sketched by Lin\ea~\cite{Lin.Shi.ea19} and instantiate it with the optimal, deterministic, perfectly-secure partitioning algorithm of Asharov\ea~\cite[Theorem~5.1]{Asharov.Komargodski.ea22}.
Let \Call{Partition}[$P$]{$A$} denote partitioning an array $A$ with a predicate $P$:
All elements $x \in A$ with $P(x) = 0$ are swapped to the front of the array, followed by all elements with $P(x) = 1$.

Instead of tackling the $k$-selection problem directly, we first consider an algorithm for selecting the single element $x_k$ with rank $k$ (\cref{alg:ObliviousRankK}).
Via \Call{Partition}[$x \mapsto x \geq x_k$]{$A$}, this algorithm can be used to realize the full $k$-selection.

For simplicity, we assume that the elements in $A$ are pairwise distinct and totally ordered;
for our application this is guaranteed as described in \cref{sec:NonDistinctPriorities}.
For general applications, a total order can easily be established by annotating each element with its index in $A$ and ordering the elements lexicographically by their value and their index.

\begin{algorithm}[H]
  \caption{Obliviously determine the element with rank $k$ in $A$.}%
  \label{alg:ObliviousRankK}
  \begin{algorithmic}[1]
    \Procedure{KElement}{$k, A$}%
      \Comment{$n \coloneqq \abs{A}$ and $0 \leq k < n$}
      \If{$n < 7$}
        \State sort $A$ obliviously
        \State\Return $A[k]$ (oblivious via a linear scan)
      \Else
        \State conceptually partition $A$ into $\ceil{\frac{n}{5}}$ groups $G_i$ of five consecutive elements each (where the last group may have fewer elements)
        \State $M \gets \text{array of length $\ceil{\frac{n}{5}}$}$
        \For{$i \gets 0, \ldots, \abs{M} - 1$}
          \State $M[i] \gets \text{median of $G_i$ (\eg via oblivious sorting)}$
        \EndFor
        \State $m \gets \Call{KElement}{\floor{\frac{\abs{M}}{2}}, M}$%
          \label{alg:ObliviousRankK:MedianOfMedians}%
          \Comment{median of medians}
        \State \Call{Partition}[$x \mapsto x \geq m$]{$A$}
        \State $c \gets \text{number of elements $x$ in $A$ with $x < m$}$
        \State \Call{Invert}{$A$} iff $k \geq c$, perform dummy swaps iff $k < c$
        \State $k \gets \begin{cases}
          k & \text{iff $k < c$,} \\
          k - \ceil{\frac{3 n - 20}{10}} & \text{iff $k \geq c$}
        \end{cases}$
        \State\Return \Call{KElement}{$k, A[0 \ldots \floor{\frac{7 n + 20}{10}} - 1]$}%
          \label{alg:ObliviousRankK:Recursion}
      \EndIf
    \EndProcedure
  \end{algorithmic}
\end{algorithm}

\begin{proof}[Proof of \cref{cor:ObliviousKSelection}]
  We now prove the desired properties of \cref{alg:ObliviousRankK}.

  \proofsubparagraph{Correctness}
  The correctness can be seen by induction over the input size $n$:
  For $n < 7$, \Call{KElement}{} trivially determines the element with rank $k$.
  For $n \geq 7$, we prove the correctness by bounding the sizes of the sets
  \begin{equation*}
    A_1 \coloneqq \setCond{x \in A}{x < m}
    \quad\text{and}\quad
    A_2 \coloneqq \setCond{x \in A}{x \geq m}
    \,\text{.}
  \end{equation*}

  Among the medians in $M$, the median of medians $m$ has rank
  \begin{equation*}
    r = \floor*{\frac{\ceil{\frac{n}{5}}}{2}}
  \end{equation*}
  since there are $\ceil{\frac{n}{5}}$ groups.
  Since all elements are pairwise distinct and totally ordered, $r$ medians are smaller than $m$ and $\ceil{\frac{n}{5}} - r = \ceil{\frac{n}{10}}$ medians are at least as big as $m$.

  \begin{figure}[t]\centering

\begin{tikzpicture}[
  brace/.style={
    decorate,
    decorate,decoration={brace,amplitude=6pt},
  },
  braceCaption/.style={
    inner sep=0,
    align=center,
    midway,
    above,yshift=8pt,
  },
]
  \definecolor{xorange}{RGB}{245,121,58}
  \definecolor{xblue}{RGB}{15,32,128}

  \matrix (elem) [
      matrix of nodes,
      nodes in empty cells,
      nodes={
        rectangle,draw=black!70,
        anchor=north,
        minimum size=1.5em,
        inner sep=0,
      },
      row sep=-\pgflinewidth,
      column sep=.8em,
    ] {
    & & & & & & & & & & & & & & & \\
    & & & & & & & & & & & & & & & \\
    & & & & & & & & $m$ & & & & & & & \\
    & & & & & & & & & & & & & & & \\
    & & & & & & & & & & & & & & & \\
  };

  \draw[fill=xorange,fill opacity=.5] (elem-3-1.north west) -- (elem-3-8.north east) -- (elem-4-8.north east) -- (elem-4-9.north east) -- (elem-5-9.south east) -- (elem-5-1.south west) -- cycle;

  \draw[fill=xblue,fill opacity=.5] (elem-1-9.north west) -- (elem-1-16.north east) -- (elem-3-16.south east) -- (elem-3-9.south west) -- cycle;

  \draw[->] ([xshift=-.5em] elem-5-1.south west) -- ([xshift=-.5em] elem-1-1.north west);
  \draw[->] ([yshift=-.5em] elem-5-1.south west) -- ([yshift=-.5em] elem-5-16.south east);

  \draw[brace] ([yshift=.5em] elem-1-1.north west) -- ([yshift=.5em] elem-1-8.north east) node[braceCaption] {$r$ groups with a median $M[i] < m$};
  \draw[brace] ([yshift=.5em] elem-1-9.north west) -- ([yshift=.5em] elem-1-16.north east) node[braceCaption] {$\ceil{\frac{n}{5}} - r$ groups with a median $M[i] \geq m$};
\end{tikzpicture}
    \caption{%
      Groups $G_i$ (sorted by their respective medians $M[i]$) and the median of medians $m$ in \cref{alg:ObliviousRankK}.
      The elements marked orange are smaller than $m$ and the elements marked blue are at least as big as $m$.\footnotemark}%
    \label{fig:KSelectionGroups}
  \end{figure}
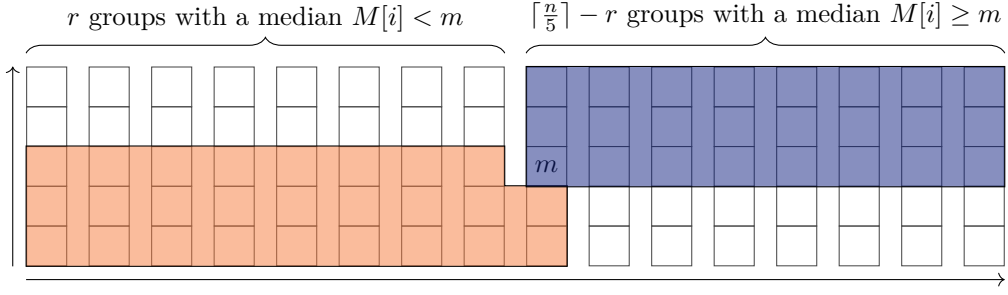
  \footnotetext{\Cref{fig:KSelectionGroups} is modeled after a similar figure for the original algorithm by Blum\ea~\cite[Figure~1]{Blum.Floyd.ea73}.}

  Except possibly for the last group, each median in $M$ is greater than two elements in its group as illustrated in \cref{fig:KSelectionGroups}.
  By transitivity, we can bound of size of $A_1$ as
  \begin{equation*}
    \abs{A_1} \geq 3 r + 2 - 2
    = 3 \floor*{\frac{\ceil{\frac{n}{5}}}{2}}
    \geq \frac{3 n - 15}{10}
    \,\text{.}
  \end{equation*}
  Similarly, we can bound the size of $A_2$ as
  \begin{equation*}
    \abs{A_2} \geq 3 \paren*{\ceil*{\frac{n}{5}} - r} - 2
    = 3 \ceil*{\frac{n}{10}} - 2
    \geq \frac{3 n - 20}{10}
    \,\text{.}
  \end{equation*}

  Depending on whether the element with rank $k$ is smaller or at least as large as the median of medians $m$, we can either exclude elements in $A_2$ or $A_1$.
  This means that always, at most $n - \frac{3 n - 20}{10} = \frac{7 n + 20}{10}$ elements remain.
  By inverting the partitioned array $A$ iff $k \geq c$, the $k$-th element is among the first $\floor{\frac{7 n + 20}{10}}$ elements in $A$ and the remaining $\ceil{\frac{3 n - 20}{10}}$ elements can be excluded.
  For $k \geq c$, the rank $k$ has to be adjusted by the number of excluded elements.

  Since $\ceil{\frac{n}{5}} \leq \floor{\frac{7 n + 20}{10}} < n$ for $n \geq 7$, the correctness of the recursion follows by induction.
  With $x_k \gets \Call{KElement}{k, A}$, $\Call{Partition}[x \mapsto x \geq x_k]{A}$ correctly swaps the $k$ smallest elements to the front.

  \proofsubparagraph{Runtime}
  In the recursive case $n \geq 7$, computing the medians $M$, partitioning~\cite[Theorem~5.1]{Asharov.Komargodski.ea22}, counting the elements $x < m$, and (conditionally) inverting $A$ can be done obliviously in linear time.
  With the two recursive calls (\cref{alg:ObliviousRankK:MedianOfMedians,alg:ObliviousRankK:Recursion}), for an overall linear runtime it thus suffices to show that
  \begin{equation*}
    \abs{M} + \floor*{\frac{7 n + 20}{10}}
    = \ceil*{\frac{n}{5}} + \floor*{\frac{7 n + 20}{10}} \leq c n
    \quad\text{for $n \geq n_0$ and some fixed $c < 1$.}
  \end{equation*}
  This holds, \eg\ for $c = 0.95$ and $n_0 = 50$.
  The runtime for $n < n_0$ is constant.

  \proofsubparagraph{Obliviousness}
  By using deterministic and oblivious sorting~\cite{Ajtai.Komlós.ea83} and partitioning~\cite{Asharov.Komargodski.ea22} algorithms, the obliviousness follows since the access pattern $\addr{\Call{KElement}{k, A}}$ is a deterministic function of the input size $n \coloneqq \abs{A}$.
\end{proof}

Although it is not necessary for our construction, the memory access pattern of \cref{alg:ObliviousRankK} is not only independent of the contents of $A$, but also of the parameter $k$.


\section{Oblivious Priority Queue Operations}%
\label{sec:ObliviousPriorityQueueOperations}

We now present pseudocode for the operations \Call{Insert}{}, \Call{Min}{}, and \Call{DeleteMin}{} of the oblivious priority queue.
We assume that the buffers $D_i$ and $U_i$, the number of levels $\ell$, and the counters $\Delta_i$ are attributes of the priority queue.
The elements are ordered by priority with dummy elements ordered after all non-dummy elements.
For brevity, we omit the additional time-stamp $t$ described in \cref{sec:NonDistinctPriorities}.

A new element is inserted by placing it into the empty up-buffer $U_0$ (\cref{alg:Insert}).

\begin{algorithm}[H]
  \caption{Insert the element $\tuple{k, p}$ into the oblivious priority queue.}%
  \label{alg:Insert}
  \begin{algorithmic}[1]
    \Procedure{Insert}{$k, p$}
      \State $U_0[0] \gets \tuple{k, p}$
      \For{$i \gets 0, \ldots, \ell - 1$}
        \State $\Delta_i \gets \Delta_i - 1$
      \EndFor
      \State\Call{Rebuild}{$\max\setCond{i < \ell}{\Delta_i = 0}$}
    \EndProcedure
  \end{algorithmic}
\end{algorithm}

Since the priority queue was rebuilt since last inserting or deleting an element, the minimal element can be found in the down-buffer $D_0$ (\cref{alg:Min}).
Note that --- for the correctness --- it is not necessary to reduce the counters $\Delta_i$ or rebuild the data structure after \Call*{Min}{}.

\begin{algorithm}[H]
  \caption{Determine the minimal element in the oblivious priority queue.}%
  \label{alg:Min}
  \begin{algorithmic}[1]
    \Procedure{Min}{{}}
      \State $\tuple{k, p_{\text{min}}} \gets \min D_0$
      \State\Return $\tuple{k, p_{\text{min}}}$
    \EndProcedure
  \end{algorithmic}
\end{algorithm}

Similarly, the minimal element can be removed by overwriting it with a dummy element $\bot$ (\cref{alg:DeleteMin}).
For obliviousness, the indexed access in \cref{alg:DeleteMin:IndexedAccess} is performed via a linear scan (over two elements).

\begin{algorithm}[H]
  \caption{Remove the minimal element from the oblivious priority queue.}%
  \label{alg:DeleteMin}
  \begin{algorithmic}[1]
    \Procedure{DeleteMin}{{}}
      \State $i_{\text{min}} \gets \arg\min D_0[\arb]$
      \State $D_0[i_{\text{min}}] \gets \bot$ (oblivious via a linear scan)%
        \label{alg:DeleteMin:IndexedAccess}
      \For{$i \gets 0, \ldots, \ell - 1$}
        \State $\Delta_i \gets \Delta_i - 1$
      \EndFor
      \State\Call{Rebuild}{$\max\setCond{i < \ell}{\Delta_i = 0}$}
    \EndProcedure
  \end{algorithmic}
\end{algorithm}

Note that both \Call{Insert}{} and \Call{DeleteMin}{} can be performed conditionally by \emph{conditionally} inserting the new element or \emph{conditionally} replacing the minimal element a dummy element.

To achieve operation-hiding security, a combined operation can (conditionally) perform all three operations \Call{Min}{}, \Call{Insert}{}, and \Call{DeleteMin}{} at once (without revealing the operations actually performed).
Here, only a single decrement of the counters $\Delta_i$ with one rebuilding step is necessary.

  \newpage

\section{Correction of \texorpdfstring{\Call{PurifyHalf}{}}{PurifyHalf}}%
\label{sec:CorrectPurifyHalf}

\Cref{alg:PurifyHalf} below realizes \Call{PurifyHalf}{} as introduced by Lin\ea~\cite{Lin.Shi.ea19}:
The procedure expects two blocks $A$ and $B$ of equal size that are already partitioned.
After the invocation, the resulting block $A$ is pure, \ie either only consists of elements $x$ with $P(x) = 0$ or only consists of elements with $P(x) = 1$, and $B$ is partitioned.

\begin{algorithm}[H]
  \caption{%
    Corrected variant of \Call{PurifyHalf}{}~\cite[full version, Algorithm~5]{Lin.Shi.ea19}.
    Other than in the original algorithm we permute the elements in-place and use a predicate $P$ instead of markings.}%
  \label{alg:PurifyHalf}
  \begin{algorithmic}[1]
    \Procedure{PurifyHalf$_P$}{$A, B$}%
      \Comment{$n \coloneqq \abs{A} = \abs{B}$ and $P(x) \in \set{0, 1}$}
      \State $b \gets \text{majority of $P(x)$ over all elements $x \in A \concat B$}$
      \For{$i \gets 0, \ldots, n - 1$}
        \State obliviously swap $A[i]$ and $B[n - 1 - i]$ iff $P(A[i]) > P(B[n - 1 - i])$
      \EndFor
      \State obliviously swap $A$ and $B$ iff $b = 1$%
        \label{alg:PurifyHalf:ABSwap}
      \State \Call{PartBitonic}[$P$]{$B$}
    \EndProcedure
  \end{algorithmic}
\end{algorithm}

For the correctness, note that before \cref{alg:PurifyHalf:ABSwap} block $A$ is pure (with $P(x) = 0$) iff $b = 0$ and block $B$ is pure (with $P(x) = 1$) iff $b = 0$.
In both cases the respective other block is bitonically partitioned.
Since the procedure can be realized with linear scans and an invocation of \Call{PartBitonic}{}, the I/O complexity is in $\Oh(\ceil{\frac{n}{B}})$.

\end{document}